\newcommand{\cA}{\mathcal{A}}
\newcommand{\cB}{\mathcal{B}}
\newcommand{\cF}{\mathcal{F}}
\newcommand{\cG}{\mathcal{G}}
\newcommand{\cH}{\mathcal{H}}
\newcommand{\cM}{\mathcal{M}}
\newcommand{\cQ}{\mathcal{Q}}
\newcommand{\cS}{\mathcal{S}}
\newcommand{\cX}{\mathcal{X}}
\newcommand{\Id}{\mathbb{I}}
\newcommand{\tr}{\text{tr}}
\newtheorem{theorem}{Theorem}
\newtheorem{proposition}{Proposition}
\newtheorem{lemma}[theorem]{Lemma}
\newtheorem{corollary}[theorem]{Corollary}
\newtheorem{definition}{Definition}
\begin{document}

\title{On optimal cloning and incompatibility}
\author{Arindam Mitra$^{1,2}$}
\affiliation{$^1$Optics and Quantum Information Group, The Institute of Mathematical Sciences,
C. I. T. Campus, Taramani, Chennai 600113, India.\\
$^2$Homi Bhabha National Institute, Training School Complex, Anushakti Nagar, Mumbai 400094, India.}

\author{Prabha Mandayam}
\affiliation{Department of Physics, Indian Institute of Technology Madras, Chennai - 600036, India.}
\date{\today}

\begin{abstract}
We investigate the role of symmetric quantum cloning machines (QCMs) in quantifying the mutual incompatibility of quantum observables. Specifically, we identify a cloning-based incompatibility measure whereby the incompatibility of a set of observables maybe quantified in terms of how well a uniform ensemble of their eigenstates can be cloned via a symmetric QCM. We show that this new incompatibility measure $\cQ_{c}$ is {\it faithful} since it vanishes only for commuting observables. We prove an upper bound for $\cQ_{c}$ for any set of observables in a finite-dimensional system and show that the upper bound is attained if and only if the observables are mutually unbiased. Finally, we use our formalism to obtain the optimal quantum cloner for a pair of qubit observables. Our work marks an important step in formalising the connection between two fundamental concepts in quantum information theory, namely, the no-cloning principle and the existence of incompatible observables in quantum theory.

\end{abstract}

\maketitle

\section{Introduction}\label{sec:intro}

Perfect cloning of an arbitrary state is impossible in quantum mechanics~\cite{wootters}. This is the essence of the celebrated \emph{no-cloning theorem} in quantum information theory, which marks a fundamental departure from the nature of information in the classical world. More precisely, the theorem states that it is impossible to copy a pair of non-orthogonal pure states via any quantum operation. Cloning is in fact a special case of a more general information processing task, namely, that of \emph{broadcasting}. Broadcasting extends the idea of cloning to mixed states by removing the restriction that the different clones need to be independent; it only requires that the marginals of the separate systems be the same as the original state. The quantum \emph{no-broadcasting} theorem thus extends the pure-state no-cloning result to state that non-commuting quantum density operators cannot be broadcast perfectly~\cite{no-broadcast96}.

The no-cloning and no-broadcasting theorems are indeed related to other fundational physical principles including the no-singaling principle~\cite{dieks82} and the uncertainty principle~\cite{scarani05}. They also have other important operational consequences. For example, perfect cloning would allow for perfect quantum state estimation using a single copy of an unknown state. Indeed, in the limit of infinitely many clones, the equivalence between quantum cloning and state estimation is well established~\cite{acin}. Similar results are known for the asymptotic  broadcasting scenario as well~\cite{chiribella, brandao}. 



Although perfect cloning or broadcasting is impossible, approximate or imperfect cloning of arbitrary quantum states is still possible, as first demonstrated in~\cite{buzek_hillery96}. Several approaches towards designing cloning devices have been since studied in the literature, the most well-known of which is perhaps the symmetric universal cloner~\cite{buzek_hillery}, which produces two identical copies of an arbitrary quantum state in any dimension, with \emph{optimal} fidelity. This construction of a $1 \rightarrow 2$ universal, symmetric quantum cloning machine (QCM) has further been extended to obtain optimal, universal, symmetric QCMs which  transform $n$ copies of a given quantum state into $m>n$ copies~\cite{gisin_massar, werner}.  Asymmetric cloning, in which the clones do not all have the same fidelity, has also been studied and trade-offs between the fidelities of the different clones have been discussed in~\cite{niu_griffiths,cerf}. Approximate cloning machines have interesting applications, for example, in estimating weak values of non-commuting observables~\cite{hofmann}. We refer to~\cite{cloning_review} for a recent review on QCMs and their applications. 


Starting with the uncertainty principle~\cite{heisenberg}, several measures have been proposed to quantify the extent of incompatibility of a set of quantum observables. For example, entropic uncertainty bounds are often used as measures of incompatibility~\cite{wehner_winter}. Other measures based on operational constraints have also been proposed for arbitrary sets of quantum observables~\cite{incompatibility_BM, PM_MS1}. 

The fact that only ensembles of commuting states can be cloned (or broadcast) perfectly, suggests a natural connection between the no-cloning/no-broadcasting theorems and incompatibility in quantum theory. For example, the universal QCM has been used to show that only infinite dimensional spaces admit maximal incompatibility, where incompatibility is quantified via joint measurability~\cite{heino}. Furthermore, it has been argued that the optimal cloning fidelity for a set of non-commuting density operators on a $d$-dimensional space gives a lower bound on the degree of incompatibility possible for a $d$-dimensional quantum state~\cite{heinosaari}. More recently, the results of~\cite{heino} have been generalised by making use of approximate cloning machines to characterize the idea of \emph{compatibility dimension}~\cite{nechita_2020}. 

We may note here that the cloning-based incompatibility characterizations described above, typically model the cloning device as a quantum channel. Observables are measured on local, imperfect copies of a state and the statistics of these imperfect clones are compared with the statistics of the marginals of a joint measurement given by the dual of the cloning channel~\cite{heinosaari2016}. Our operational set-up is rather different and motivated by quantum cryptographic protocols, where ensembles of eigenstates of (incompatible)  observables are used for secure communication~\cite{gisin97}. We use the quantum cloning device to clone the ensemble of eigenstates corresponding to a set of Hermitian operators and hence make a quantitative connection between imperfect cloning of the ensemble and the incompatibility of the corresponding physical observables. 

Specifically, we examine the role of a symmetric $1\rightarrow 2$ QCM in quantifying the mutual incompatibility of quantum observables. Corresponding to a given set of physical observables -- associated with a set of Hermitian operators -- we consider the ensemble corresponding to a uniform mixture of their eigenstates. We propose that the incompatibility of a set of observables maybe quantified in terms of the optimal cloning fidelity of the associated ensemble of eigenstates. The more incompatible the set is, the smaller is the cloning fidelity. 

We show that our cloning-based measure is a {\it faithful} measure of incompatibility, in the sense that it vanishes iff the observables commute. We also obtain a tight upper bound on this new measure and show that the maximum value is attained only for a set of mutually unbiased bases. Our work provides the first quantitative statement relating the operational task of quantum cloning of a given ensemble of eigenstates and the mutual incompatibility of the corresponding set of Hermitian operators. The bounds proved here might therefore prove useful in the context of cryptographic tasks such as quantum key distribution, where the eavesdropper could carry out a quantum cloning attack on the signal states.

The rest of this paper is organized as follows. In~\ref{sec:prelim} we briefly review the $1\rightarrow 2$ symmetric QCM, and describe the optimal cloning fidelity of an ensemble of states. In Sec.~\ref{sec:cloning_measure}, we define the cloning-based incompatibility measure $\cQ_{c}$ and prove tight upper and lower bounds. Specifically, we calculate the measure for a set of mutually unbiased bases (MUBs) in Sec.~\ref{sec:Qc_mub} and use this to obtain an upper bound on $\cQ_{c}$ in Sec.~\ref{sec:Qc_ub}. We use our formalism to characterize the optimal QCM for a pair of qubit observables in Sec.~\ref{sec:qubit}. In Sec.~\ref{sec:general}, we discuss the possibility of extending our measure beyond a set of rank-one projectors, to quantify the incompatibility of a general set of quantum measurements. Finally, we summarize our contributions and discuss the future outlook in Sec.~\ref{sec:concl}.


\section{Preliminaries}\label{sec:prelim}

In this section, we briefly recall the mathematical description of the symmetric, universal quantum cloner and analyze the optimal cloning fidelity of an ensemble of states.



\subsection{ Symmetric Quantum Cloning Machine}\label{sec:sym_cloning}

We briefly review the well known symmetric quantum cloning machine (QCM). A cloning operation is said to be \emph{symmetric} if the fidelity between input and output states are equal for \emph{all} of the clones; in other words all the copies have identical fidelity with the original state. The cloner is further said to be \emph{universal} if it achieves a constant, state-independent fidelity between the original (input) and the cloned (output) states. In what follows, we will consider the symmetric cloner without loss of generality, because it is always possible to obtain an optimal symmetric cloner by taking linear combinations of optimal asymmetric cloners~\cite{duan17}.


The symmetric QCM is a unitary transformation on a tripartite system $\cH_{A}\otimes\cH_{B}\otimes\cH_{C}$, where $\cH_{A}$ denotes the `input' system containing the state to be cloned, $\cH_{B}$ corresponds to the ancilla system into which the original state is cloned, and $\cH_{C}$ denotes the Hilbert space associated with the cloner. Suppose $\cH_{A}$ and $\cH_{B}$ are Hilbert spaces of dimension $d$. Let $\cB \equiv \{\ket{i}_A, 1\leq i \leq d\}$ denote a fixed orthonormal basis in $\cH_{A}$. The symmetric QCM is then defined via the linear transformation~\cite{buzek_hillery},
\begin{eqnarray}
&& \ket{i}_A\ket{0}_B\ket{X}_C \longrightarrow   \; p\ket{i}_A\ket{i}_B\ket{X_i}_C  \nonumber \\
&& + \; q \, \sum_{j\neq i} \left(\ket{i}_A\ket{j}_B + \ket{j}_A\ket{i}_B \right)\ket{X_j}_C , \label{eq:sym_clone}
\end{eqnarray}
where, $\ket{X}_C$ is a fixed (normalized) state of system $\cH_{C}$, $\ket{0}_B$ is a blank ancilla state which is transformed to a clone of $|i\rangle_{A}$, and $\{ \ket{X_i}_C\}$ are the fixed orthonormal basis vectors of cloning machine. Clearly, the joint tripartite state is symmetric in the first two systems. The real coefficients $p$ and $q$ must satisfy the following relation in order to ensure unitarity of the QCM transformation:
\begin{equation}
p^2+2(d-1)q^2=1 . \label{eq:unitarity}
\end{equation}
Equivalently, the QCM can be modeled as the quantum channel $\Phi$ mapping density operators on system $\cH_A$ to density operators on $\cH_A \otimes \cH_B$, defined as
$\Phi(\rho) = {\rm tr}_{C} \left[ U(\rho \otimes |0\rangle\langle 0| \otimes |X\rangle \langle X|)U^{*}\right] $ for all states $\rho$ on system $\cH_A$, where $U$ is unitary operator associated with the cloning machine. In the rest of the paper, we will use the linear transformation in Eq.~\eqref{eq:sym_clone} to describe the action of the symmetric QCM.  

Suppose $\ket{\phi}=\sum_i\alpha_i\ket{i}$ is the state in $\cH_{A}$ to be cloned. After applying the QCM transformation, the clones are given by the output density matrices,
\begin{align}
& (\rho_{\rm out})_A = (\rho_{\rm out})_B \nonumber \\
 &= \sum^d_{i=1}\mid\alpha_i\mid^2 \left( p^2+(d-2)q^2 \right) \ket{i}\bra{i}\nonumber \\ 
 &+\sum^d_{{i,j=1},{i\neq j}}\alpha_i\alpha^*_j \left( 2pq+(d-2)q^2 \right)\ket{i}\bra{j} + q^2 I . \label{eq:rho_out}
\end{align}
The QCM defined above is thus symmetric since the two clones are identical. 

Note that the class of symmetric QCMs defined by Eq.~\eqref{eq:sym_clone} is characterized by the choice of basis $\cB$, as well as the parameters $(p,q)$, subject to the unitarity constraint in Eq.~\eqref{eq:unitarity}. In other words, the final cloned state $\rho_{\rm out}$ corresponding to a given input state $|\phi\rangle$ varies depending on the choice of basis $\cB$ with respect to which the QCM is defined, apart from its explicit dependence on the choice of the parameters $(p,q)$.  We will refer to the basis $\cB$ with respect to which the cloner is defined, as the \emph{cloning basis}.

We may note two special cases here. The first is the limiting case $q=0$. Unitarity implies that $p=1$, implying that the QCM operation in Eq.~\eqref{eq:sym_clone} becomes,
\[ \ket{i}_A\ket{0}_B\ket{X}_C \longrightarrow  \ket{i}_A \ket{i}_B\ket{X_i}_{C}. \] 
In other words, the resulting QCM is a perfect cloner for the specific choice of basis $\cB\equiv \{\ket{i}_A\}$.  Now, consider a density matrix $\rho$, expressed in the basis $\cB$ as, $\rho = \sum_{ij}a_{ij}\ket{i}\bra{j}$. The cloned density matrices obtained using the quantum cloner with $q=0$ are of the form, 
\[ (\rho_{\rm out})_{A} = (\rho_{\rm out})_{B} =\sum_{i} a_{ii}\ket{i}\bra{i}. \]
Thus the clones are identical to the post-measurement state obtained after a projective measurement of the input state $\rho$ in the basis $\cB$, thus showing that the action of the QCM with $q=0$ is identical to that of a projective measurement followed by state reconstruction. 

The second limiting case to note is that of the \emph{universal}, symmetric cloner, which results when we impose the additional constraint $p^{2} = 2pq$~\cite{buzek_hillery}. This constraint ensures that the cloned states are of the form,
\begin{equation}
(\rho_{\rm out})_{A} = (\rho_{\rm out})_{B} = s |\phi\rangle\langle \phi| + \frac{(1-s)}{d} I, \label{eq:depolarizing}
\end{equation}
where $s = \frac{d+2}{2(d+1)}$ is a purely dimension-dependent factor, independent of the input state $\ket{\phi}$. This ensures universality in the sense that \emph{all} input states are cloned with the same fidelity. The universal, symmetric QCM described here is known to be \emph{optimal}, in the sense that it achieves the maximum possible value of $s$ for any dimension~\cite{werner}.

\subsection{Optimal cloning fidelity of an ensemble}\label{sec:opt_cloning}

Using the symmetric QCM described by Eqs.~\eqref{eq:sym_clone} and~\eqref{eq:unitarity}, we would like to quantify the average cloning fidelity obtained for an ensemble of states. Specifically, suppose the ensemble of states 
\begin{equation}
\cS \equiv \{|\psi_{m}\rangle\langle \psi_{m}|, 1\leq m \leq M\},  \label{eq:ensemble}
\end{equation}
is cloned using the QCM described in Eq.~\eqref{eq:sym_clone}. Let $\rho_{m}=\ket{\psi_{m}}\bra{\psi_{m}}$ and $(\rho_{m})_{\rm out}$ denote the clone corresponding to the state $\rho_{m}$ in the ensemble $\cS$. As before, let $\cB=\{\ket{i}, i = 1,2,\ldots,d\}$ denote the fixed orthonormal basis in which the QCM is defined. The average cloning fidelity $\cF_{\rm avg}(\cS, \cB, p, q)$ for the ensemble $\cS$ is then defined as,

\begin{equation}
\cF_{\rm avg} (\cS, \cB, p,q) = \frac{1}{M}\left(\sum_{m}\tr[ \, (\rho_{m})_{\rm out} \rho_{m} \, ] \right). \label{eq:avg_fidelity}
\end{equation}

Note that the average cloning fidelity $\cF_{\rm avg}$ is indeed a function of the input ensemble for the case of the symmetric QCM under consideration here. The universal cloner, on the other hand, would yield a constant fidelity for all input states.  Furthermore, the average cloning fidelity also depends on the basis $\cB$ with respect to which the QCM is defined, as well as the parameters $p,q$. We further make this dependence explicit by evaluating the average cloning fidelity for a general ensemble $\cS$.

Expanding the state $\ket{\psi_{m}}$ in the basis $\ket{i}$, we have,
\[ \ket{\psi_{m}} = \sum^d_{i} (\alpha_{m})_i\ket{i} . \]
The input states can therefore be written as,
$$\rho_{m}=\ket{\psi_{m}}\bra{\psi_{m}}=\sum_{i=1}^{d}\sum_{j=1}^{M}(\alpha_{m})_i(\alpha_{m})^{*}_{j}\ket{i}\bra{j}. $$

Applying the symmetric QCM on the above state, we get,
\begin{align}
(\rho_{m})_{\rm out}&= \sum_{i=1}^{d}\mid(\alpha_{m})_i\mid^2(p^2+(d-2)q^2)\ket{i}\bra{i}\nonumber\\
& + \sum^d_{i \neq j}(\alpha_{m})_i (\alpha_{m})^{*}_j (2pq+(d-2)q^2)\ket{i}\bra{j}+ q^2 I . \label{eq20}
\end{align}
The fidelity between the input state $|\psi_{m}\rangle$ and its clone can therefore be evaluated as,
\begin{eqnarray}
&& \tr\left[ \rho_{m}(\rho_{m})_{\rm out} \right] \nonumber\\
&=& \sum_{i=1}^{d}\mid (\alpha_{m})_i\mid^4 (p^2+(d-2)q^2)\nonumber\\
&&  + \; \sum^{d}_{i\neq j}\mid (\alpha_{m})_i\mid^2\mid (\alpha_{m})_j\mid^2 (2 pq+(d-2)q^2) + q^2\nonumber\\
&=& A_{m}( p^2+(d-2)q^2 )+ B_{m}( 2pq+(d-2)q^2 )+ q^2 ,\nonumber
\end{eqnarray}
where the state-dependent factors $\{A_{m}\}$ and $\{B_{m}\}$ are defined as,
\begin{eqnarray}
A_{m} &=& \sum_{i=1}^{d}\mid (\alpha_{m})_i\mid^4  , \nonumber \\
B_{m} &=& \sum^d_{i\neq j}\mid (\alpha_{m})_i\mid^2\mid (\alpha_{m})_j \mid^2 . \label{eq:A_B}
\end{eqnarray}
Note that $\{A_{m}\}$ are in effect the participation ratios for the input states $|\psi_{m}\rangle $ in the basis $\cB$. Since $\sum_{i=1}^{d} \vert (\alpha_{m})_i \vert^2 = 1$ for all $1\leq m \leq M$, it follows that,
\[  1 \geq A_{m} = \sum_{i=1}^{d}\mid (\alpha_{m})_i\mid^4 \geq \frac{1}{d}, \; \forall \; m. \] 
Furthermore, since the states $|\psi_{m}\rangle$ are normalised, $A_{m}   + B_{m} = 1$ for all $1\leq m \leq M$, we have,
\begin{equation}
B_{m} =1 - A_{m} \leq 1 - \frac{1}{d}, \; \forall \, m.  \label{eq:B_lm}
\end{equation}
Thus the average cloning fidelity defined in Eq.~\eqref{eq:avg_fidelity} can be explicitly evaluated as,
\begin{eqnarray}
&& \cF_{\rm avg}(\cS, \cB, p, q) \nonumber \\
&=& \frac{1}{M}\bigg[\sum_{m}(A_{m}[p^2+(d-2)q^2]\nonumber\\
&& \hspace{3cm}+B_{m}[2pq+(d-2)q^2]+q^2)\bigg]\nonumber\\
&=& \frac{1}{M}\left[A p^2+ B (2pq) + Mq^2 + (A + B)(d-2)q^2 \right], \nonumber
\end{eqnarray}
where,
\begin{eqnarray}
A(\cS, \cB) &=& \sum_{m=1}^{M}A_{m} = \sum_{m=1}^{M}\sum_{i=1}^{d}\mid (\alpha^{l}_{m})_i\mid^4 \label{eq:A_B2}  \\
B (\cS,\cB) &=& \sum_{m=1}^{M}B_{m} = \sum_{m=1}^{M}\sum^d_{i\neq j}\mid (\alpha_{m})_i\mid^2\mid (\alpha_{m})_j \mid^2 . \nonumber
\end{eqnarray}
The quantities $A$ and $B$ are related as,
\[ B =\sum_{m=1}^{M}(1-A_{m})= M-A,\]
so that the average cloning fidelity can be expressed as,
\begin{eqnarray}
\cF_{\rm avg}(\cS, \cB, p,q) &=& \frac{A (\cS,\cB)}{M} (p^{2}-2pq) \nonumber \\
&+&  2pq +  (d-1)q^{2} .  \label{eq:avg_fid2}
\end{eqnarray}

Once again, we note the two limiting cases here. For $q=0$, $\cF_{\rm avg}(\cS,\cB, p=1,q=0) =\frac{A(\cS,\cB)}{M}$. Since the QCM in this case simply implements a projective measurement in the cloning basis $\cB$, $\frac{A(\cS,\cB)}{M}$ represents the average fidelity gain due to the projective measurement and state reconstruction, studied in~\cite{incompatibility_BM}. For completeness, we have defined the average fidelity obtained via projective measurement and state reconstruction given in Eq.~\eqref{eq:fid_avg1} of Appendix~\ref{sec:Q_avgFid}. We formally state this equivalence below.
\begin{proposition}\label{prop:zero_q}
For an ensemble $\cS$ of $M$ quantum states, the average cloning fidelity obtained using a symmetric QCM with cloning basis $\cB$ and $p=1$, $q=0$, is the same as the average fidelity gained via a measurement and reconstruction procedure, where the measurement is a projective measurement in the basis $\cB$, that is,
\begin{eqnarray}
&& \cF_{\rm avg}(\cS,\cB, p=1,q=0) = \frac{A(\cS,\cB)}{M} \nonumber \\
&& \equiv \mathbb{F}_{\rm avg}(\cS, \cM_{\cB}, \cA_{\cB}), \label{eq:cloning_fid_equality}
\end{eqnarray}
where $\cM_{\cB}$ denotes a measurement in basis $\cB$ and $\cA_{\cB}$ denotes the corresponding reconstruction strategy. 
\end{proposition}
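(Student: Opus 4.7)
The proof is essentially a direct calculation that unpacks both sides of the claimed equality and checks that they reduce to the same quantity $A(\cS,\cB)/M$. I would structure it around two independent evaluations of the average fidelity, one from the cloning side and one from the measurement-and-reconstruction side, and then compare.

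First, I would handle the cloning side by specializing the general formula already derived in Eq.~\eqref{eq:avg_fid2}. Substituting $p=1$, $q=0$ (which trivially satisfies the unitarity constraint~\eqref{eq:unitarity}) into
\[ \cF_{\rm avg}(\cS,\cB,p,q) = \frac{A(\cS,\cB)}{M}(p^2-2pq) + 2pq + (d-1)q^2 \]
makes the last two terms vanish and gives $\cF_{\rm avg}(\cS,\cB,1,0) = A(\cS,\cB)/M$ immediately. This is the easy half; it just records that the $q=0$ QCM acts as a dephasing channel in the basis $\cB$ and its average fidelity on the ensemble is the averaged participation ratio.

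Second, I would compute $\mathbb{F}_{\rm avg}(\cS,\cM_{\cB},\cA_{\cB})$ from its defining formula in Eq.~\eqref{eq:fid_avg1}, suitably adapted to an ensemble of $M$ pure states drawn uniformly. Here $\cM_{\cB}$ has POVM elements $M_i = \ket{i}\bra{i}$ and $\cA_{\cB}$ reconstructs $\sigma_i = \ket{i}\bra{i}$ on outcome $i$. Expanding $\ket{\psi_m} = \sum_i (\alpha_m)_i \ket{i}$, each pair of traces in the sum collapses to $|\langle i|\psi_m\rangle|^{2} = |(\alpha_m)_i|^{2}$, so the two-trace product per outcome is $|(\alpha_m)_i|^{4}$. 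Summing over $i$ gives $\sum_i |(\alpha_m)_i|^{4} = A_{m}$ by the definition in Eq.~\eqref{eq:A_B}, and averaging over $m$ produces $\frac{1}{M}\sum_m A_m = A(\cS,\cB)/M$ using the definition of $A(\cS,\cB)$ in Eq.~\eqref{eq:A_B2}.

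Comparing the two expressions finishes the proof. There is no real obstacle here; the only thing to be careful about is matching conventions, specifically verifying that the measurement-and-reconstruction fidelity formula~\eqref{eq:fid_avg1} (originally written for an ensemble of $Nd$ eigenstates) specializes correctly to a generic ensemble of $M$ pure states with uniform weights $1/M$, and that the POVM–reconstruction pair $(\cM_{\cB},\cA_{\cB})$ associated with a projective measurement in $\cB$ is exactly the one obtained by setting $q=0$ in~\eqref{eq:sym_clone}, as already observed in the discussion following Eq.~\eqref{eq:unitarity}. This conceptual identification, namely that the $q=0$ symmetric QCM is operationally a project-and-prepare channel in the cloning basis, is what makes the numerical coincidence meaningful.
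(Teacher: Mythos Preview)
Your proposal is correct and mirrors the paper's own argument: the paper obtains $\cF_{\rm avg}(\cS,\cB,1,0)=A(\cS,\cB)/M$ by specializing Eq.~\eqref{eq:avg_fid2} and then identifies this with $\mathbb{F}_{\rm avg}(\cS,\cM_{\cB},\cA_{\cB})$ via the observation (made just after Eq.~\eqref{eq:unitarity}) that the $q=0$ QCM acts as project-and-prepare in $\cB$. Your explicit computation of the measurement-and-reconstruction side from Eq.~\eqref{eq:fid_avg1} is a bit more detailed than what the paper writes, but the route is the same.
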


On the other hand, for $p^2=2pq$, 
\begin{equation}
 \cF_{\rm avg}(\cS) = \frac{d+3}{2(d+1)} \equiv \cF_{\rm univ}(d) ,  \label{eq:fopt_univ}
 \end{equation}
independent of $A$ and therefore independent of the ensemble $\cS$ as well as the choice of cloning basis $\cB$. This is the case of the universal symmetric cloning machine, and the corresponding cloning fidelity -- which we denote as $\cF_{\rm univ}(d)$ -- is indeed a constant across input states for a given dimension~\cite{buzek_hillery}.

Finally, we define the \emph{optimal cloning fidelity} for the ensemble $\cS$ as the maximum possible value of the average cloning fidelity when optimized over \emph{all} symmetric QCMs corresponding to different cloning bases $\cB$:
\begin{equation}
\cF_{\rm opt}(\cS) = \max_{p,q}\max_{\cB} \cF_{\rm avg}(\cS,\cB,p,q). \label{eq:opt_cloning_fid1}
\end{equation}
In what follows, we will refer to the basis $\cB_{\rm opt}$ which maximizes the average cloning fidelity as the {\emph optimal cloning basis} and use $(p_{\rm opt},q_{\rm opt})$ to denote the optimal values of the parameters $(p,q)$.

At first glance, computing the optimal cloning fidelity for a general ensemble $\cS$ appears to be a daunting task, since it involves a double optimization. However, looking at the form of the average fidelity function in Eq.~\eqref{eq:avg_fid2}, it is possible to do the optimization over $\cB$ first and then perform the optimization over the parameters $p,q$, provided we consider two distinct regimes. When $p^{2} > 2pq$, the optimal cloning basis $\cB_{\rm opt}$ can be fixed by  calculating the maximum value of the quantity $A(\cS,\cB)$, defined as,
\begin{equation}
\max_{\cB}A(\cS,\cB) = A_{\rm max}(\cS) \equiv A(\cS, \cB_{\rm opt}).
\end{equation}  
When $p^{2} < 2pq$, the optimal cloning basis is the one which minimizes the quantity $A$, that is,
\[ \min_{\cB}A(\cS,\cB) = A_{\rm min}(\cS) \equiv A(\cS, \cB_{\rm opt}).\]

Guided by the properties of the general, symmetric QCM discussed in Sec.~\ref{sec:sym_cloning}, we choose to work in the regime where $p^{2} > 2pq$. Specifically, we see from Eq.~\eqref{eq:rho_out} that the larger the value of $q$, larger is the contribution of the maximally mixed state to the output states $\rho_{\rm out}$ of the QCM. Hence, in the regime where $p^{2} < 2pq$, we expect the clones corresponding to different ensembles to be more identical to each other, and the corresponding QCMs may not be good at differentiating between the properties of different input ensembles. 

On the other hand, in the regime where $p^{2} > 2pq$, the contribution of the maximally mixed state is smaller, and we expect the clones corresponding to different input ensembles to reflect the properties of the corresponding ensembles. Thus we define the optimal cloning fidelity as,
\begin{equation}
\cF_{\rm opt}(\cS) = \underset{(p,q): p^{2}>2pq}{\rm max}\max_{\cB} \cF_{\rm avg}(\cS,\cB,p,q), \label{eq:opt_cloning_fid}
\end{equation}
and this forms the basis of our cloning-based incompatibility measure defined in the following section. 

\section{Quantifying Incompatibility via symmetric QCMs}\label{sec:cloning_measure}

Symmetric QCMs provide an operational approach to quantify the mutual incompatibility of a set of quantum observables, based on how well the corresponding eigenstates can be cloned. Specifically, given a set of $N$ observables $\cX = \{ X^1,\ldots,X^N \}$ on a $d$-dimensional Hilbert space, we define a measure of incompatibility in terms of the optimal cloning fidelity obtained using a symmetric QCM for a uniform mixture of their eigenstates. Note that, in what follows, we assume the canonical association of a set of physical observables with a set of Hermitian operators on the space. Furthermore, we assume a set of nondegenarate Hermitian operators so that the ensemble $\cS$ essentially comprises rank-one projection operators. In Sec.~\ref{sec:general}, we indicate as to how the measure defined here can be generalized beyond the case of nondegenrate Hermitian operators. 

\subsection{Operational Motivation and Definitions}\label{sec:motivation}

The motivation for broadcasting an ensemble of eigenstates of observables comes
from quantum cryptography. In a typical quantum key distribution (QKD) protocol, one party (Alice) encodes classical information in eigenstates of incompatible observables and the receiver (Bob) decodes by making suitable measurements at his end. Such a protocol is considered secure against eavesdropping, due to the fact  that eigenstates corresponding to incompatible observables cannot be cloned or discriminated perfectly~\cite{gisin97}. However, the eavesdropper could always implement a \emph{cloning attack}, making use of a QCM to make imperfect clones of the states being sent by Alice. It is therefore important from a cryptographic point of view, to quantify how best a quantum cloning device can clone such ensembles of eigenstates. This then becomes a natural operational setting to quantify the incompatibility of the corresponding observables as well.

Formally, we define the eigenstate ensemble $\cS$ corresponding to a set of observables $\cX$, as follows. 
\begin{definition}[Eigenstate ensemble]\label{def:ensemble}
Let $\cX = \{ X^1,\ldots,X^N \}$ be a set of $N$ observables  on a $d$-dimensional Hilbert space, and let $\{\ket{\psi^{l}_{m}}\}$ denote the eigenbasis of observable $X^{l}$. Then, the eigenstate-ensemble $\cS$ corresponding to the set $\cX$ is defined as the set of eigenstates of the observables in the set, with each eigenstate picked with equal probability. That is,
\begin{equation}
 \cS = \left\{ \frac{1}{Nd}, \ket{\psi^{l}_{m}}\right\}, \label{eq:eig_ensemble}
\end{equation}
 with  $1 \leq m \leq d$ and $1\leq l\leq N$.  
\end{definition}

Note that such an ensemble can be cloned perfectly if and only if it is comprised of mutually orthogonal states. The corresponding set of observables would then have to form a commuting set of observables. However, for a general set of observables $\cX$, the corresponding eigenstate ensemble $\cS$ will have some non-orthogonal states. Such an ensemble cannot be cloned with an optimal cloning fidelity of $1$. We may therefore use the deviation of the optimal cloning fidelity of their eigenstate ensemble from unity as a means to quantify the mutual incompatibility of a set of quantum observables. Our intuition suggests that the smaller the optimal cloning fidelity of their eigenstate ensemble, the larger is the mutual incompatibility of a set of observables. This intuition leads to a cloning-based incompatibility measure defined below.

Recall that the average cloning fidelity $\cF_{\rm avg}(\cS,\cB, p, q)$ for an ensemble $\cS$ using a symmetric QCM with parameters $(p,q)$ defined with respect to basis $\cB$, is given by the expression in Eq.~\eqref{eq:avg_fidelity}. We may now define the cloning-based incompatibility measure $\cQ_{c}$ as follows.
\begin{definition}[Incompatibility Measure]
For any set $\cX$ of $N$ observables on a $d$-dimensional Hilbert space, the incompatibility $\cQ_{c}$ is defined as,
\begin{equation}
\cQ_{c} (\cX) = 1 - \cF_{\rm opt}(\cS),\label{eq:Qc_defn}
\end{equation}
where, $\cS$ is the ensemble of eigenstates of the observables in $\cX$ and $\cF_{\rm opt}(\cS)$ is the optimal cloning fidelity defined in Eq.~\eqref{eq:opt_cloning_fid}.
\end{definition}

Before we proceed to analyze this measure in detail, we would like to make the following remarks.
\begin{itemize}
\item[(a)] As mentioned in Sec.~\ref{sec:intro}, there exist nice approaches in the literature to characterize the incompatibility of a pair of POVMs, using approximate  cloning or broadcasting channels~\cite{heino, heinosaari2016}. However, these approaches also suffer from certain drawbacks. For example, while the symmetric \emph{universal} cloner gives rise to a depolarizing channel between the input states and the cloned state (see Eq.~\eqref{eq:depolarizing}), a general symmetric cloner does not always have depolarizing form; rather it may correspond to an arbitrary noise channel between the input and cloned states. Correspondingly, in the Heisenberg picture, a general symmetric cloning channel would transform observables in such a way that the output observables are not simply unsharp versions of the original observables. Hence, optimizing over all symmetric cloners might not lead to a sensible measure in this scenario. Furthermore, it is still not known if the incompatibility bounds obtained using a universal, symmetric cloning channel are tight~\cite{heinosaari}. 

In a departure from this approach, we aim to quantify how well a symmetric cloning device can clone the ensemble of eigenstates associated with a set of physical observables. In this scenario, we are able to obtain tight upper and lower bounds on such a cloning-based incompatibility. Our measure therefore provides an alternate perspective, on how quantum cloning devices can be used to quantify incompatibility.

\item[(b)] It is worth noting here that while the cloning-based incompatibility measure is similar in spirit to the measure $\cQ$ proposed in~\cite{incompatibility_BM}, the two measures are operationally quite different. As discussed in Appendix~\ref{sec:Q_avgFid}, the measure $\cQ$ captures the incompatibility of a set of observables based on how well their eigenstate ensemble maybe reproduced via a measurement-and-reconstruction protocol, where the measurement strategy could include POVMs as well. On the other hand, our cloning-based measure $\cQ_{c}$ captures the incompatibility of a set of observables as reflected by how well their eigenstate ensemble can be cloned, using a symmetric $1\rightarrow 2$ QCM. The optimization is now performed over all symmetric quantum cloners. The fact that the measures $\cQ_{c}$ and $\cQ$ capture different operational notions of incompatibility is reflected in subsequent sections when we compare and contrast the numerical values of these two measures for mutually unbiased observables.

\item[(c)] Finally, we focus on symmetric rather than asymmetric QCMs for the following reason. In asymmetric cloning, the clones do not all have the same fidelity. Such asymmetric cloning machines are indeed useful to derive trade-off relations between the fidelities with which different bases can be clones. However, our goal is to construct a cloning-based measure that captures the non-orthogonality of an ensemble of states corresponding to different bases (or different observables). An asymmetric cloner that acts differently on states belonging to different bases would not suit our purpose. Rather, we need to use symmetric cloners which are not biased to any particular basis (or observable) in a given set. In other words, while symmetric cloning is useful to quantify incompatibility of a set of bases in absolute terms, asymmetric cloning is more useful as a relative measure, to quantify trade-offs between individual bases or observables.
\end{itemize}

\subsection{Properties of the measure $\cQ_{c}$}\label{sec:props}

The following proposition shows that the measure $\cQ_{c}$ is a true measure of incompatibility, since it attains a trivial value only for a set of commuting observables.

\begin{lemma}\label{lem:Qc_lb}
The cloning-based incompatibility measure satisfies $\cQ_{c}(\cX) \geq 0$, with equality attained iff all the observables in $\cX$ commute.
\end{lemma}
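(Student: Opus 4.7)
The plan is to break the claim into three parts: non-negativity, the ``if'' direction, and the non-trivial converse.

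Non-negativity is immediate. For each pure state $\rho_{m} = \ket{\psi_{m}}\bra{\psi_{m}}$ and each density operator $(\rho_{m})_{\rm out}$, one has $\tr[\rho_{m}(\rho_{m})_{\rm out}] = \langle\psi_{m}|(\rho_{m})_{\rm out}|\psi_{m}\rangle \leq 1$, so averaging in Eq.~\eqref{eq:avg_fidelity} yields $\cF_{\rm avg}(\cS,\cB,p,q) \leq 1$ for every admissible $(\cB,p,q)$, whence $\cF_{\rm opt}(\cS) \leq 1$ and $\cQ_{c}(\cX) \geq 0$.

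For the ``if'' direction, assume the observables in $\cX$ pairwise commute and fix a common eigenbasis $\cB^{*}$, so every $\ket{\psi^{l}_{m}}$ in the ensemble $\cS$ is (up to a phase) a vector of $\cB^{*}$. Taking the limiting parameters $p = 1$, $q = 0$ (which satisfy Eq.~\eqref{eq:unitarity} as well as the regime constraint $p^{2} > 2pq$) together with cloning basis $\cB = \cB^{*}$, Eq.~\eqref{eq:sym_clone} reduces to $\ket{i}_{A}\ket{0}_{B}\ket{X}_{C} \mapsto \ket{i}_{A}\ket{i}_{B}\ket{X_{i}}_{C}$, which clones every state in $\cS$ perfectly. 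Thus $\cF_{\rm avg}(\cS,\cB^{*},1,0) = 1$, giving $\cF_{\rm opt}(\cS) = 1$ and $\cQ_{c}(\cX) = 0$.

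The non-trivial direction is the converse. Assuming $\cQ_{c}(\cX) = 0$, i.e.\ $\cF_{\rm opt}(\cS) = 1$, I would first argue existence of a maximizer: the unitarity constraint Eq.~\eqref{eq:unitarity} confines $(p,q)$ to a compact ellipse and $\cB$ ranges over a compact unitary orbit, so $\cF_{\rm avg}$ attains its supremum on the closure of the admissible region $\{p^{2} > 2pq\}$. On the boundary $p^{2} = 2pq$ the fidelity equals $\cF_{\rm univ}(d) = \frac{d+3}{2(d+1)} < 1$ by Eq.~\eqref{eq:fopt_univ}, so any maximizer attaining the value $1$ must lie strictly in the open regime; since $\tr[\rho_{m}(\rho_{m})_{\rm out}] \leq 1$ for each $m$, the average equalling $1$ forces every eigenstate in $\cS$ to be cloned perfectly. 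I would then equate $(\rho_{m})_{\rm out} = \rho_{m}$ term-by-term in the basis $\cB$ using Eq.~\eqref{eq20}: the off-diagonal condition reads $(\alpha_{m})_{i}(\alpha_{m})_{j}^{*}\bigl(2pq + (d-2)q^{2} - 1\bigr) = 0$, and the prefactor simplifies via Eq.~\eqref{eq:unitarity} to $-\bigl[(p-q)^{2} + (d-1)q^{2}\bigr]$, which cannot vanish on the ellipse. Hence $(\alpha_{m})_{i}(\alpha_{m})_{j}^{*} = 0$ for all $i \neq j$, so $\ket{\psi_{m}}$ is (up to phase) a vector of $\cB$; the diagonal equation then forces $q = 0$. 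Since this must hold for every eigenstate of every $X^{l}$, the basis $\cB$ is a common eigenbasis for all observables in $\cX$, so they pairwise commute.

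The principal obstacle is this last structural step of the converse: extracting ``every eigenstate is a $\cB$-basis vector'' from ``every eigenstate is cloned perfectly''. It is a careful simultaneous analysis of the diagonal and off-diagonal entries of Eq.~\eqref{eq20} under the unitarity constraint Eq.~\eqref{eq:unitarity}; with this rigidity in hand, the commutativity conclusion is immediate.
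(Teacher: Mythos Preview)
Your argument is correct. The non-negativity and the ``if'' direction match the paper's proof essentially verbatim: choose the common eigenbasis as the cloning basis and set $(p,q)=(1,0)$.

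For the converse, however, you take a genuinely different route. The paper dispatches it in one line by invoking the no-cloning principle: if $\cF_{\rm opt}(\cS)=1$ then some QCM perfectly clones every eigenstate, so the eigenstates must be pairwise orthogonal, forcing a common eigenbasis. You instead work directly with the structure of the Buzek--Hillery cloner: you first argue by compactness and the boundary value $\cF_{\rm univ}(d)<1$ that a maximizer with $p^{2}>2pq$ actually exists, and then equate $(\rho_{m})_{\rm out}=\rho_{m}$ entrywise in Eq.~\eqref{eq20}, using the unitarity relation to show the off-diagonal prefactor $2pq+(d-2)q^{2}-1=-[(p-q)^{2}+(d-1)q^{2}]$ never vanishes, hence every $\ket{\psi_{m}}$ is a basis vector of $\cB$. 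The paper's route is shorter and conceptually cleaner, leaning on a foundational theorem; your route is more self-contained and, notably, handles the existence-of-maximizer issue (sup versus max over the open region $p^{2}>2pq$) that the paper's one-liner leaves implicit.
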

\begin{proof}
The proof simply follows from the fact that a set of states can be cloned perfectly iff they are mutually orthogonal~\cite{wootters, no-broadcast96}. Specifically, consider a set of $N$ commuting observables $(X^1,X^2,\ldots,X^N)$ in a $d$-dimensional Hilbert space, and let their common eigenbasis be denoted $\cB\equiv \{\ket{i}\}$. Consider a symmetric QCM defined with respect to this common eigenbasis $\cB$ with the parameters $p=1$, $q=0$. As discussed in Sec.~\ref{sec:sym_cloning} above, such a QCM achieves the maximum possible fidelity $\cF_{\rm avg}(\cS, \cB) = 1$ for the ensemble of common eigenstates $\cS$, implying that $\cQ_{c}(\cX) = 0$. 

The converse statement simply follows from the no-cloning principle.  
\end{proof}

We next derive a simple expression for the incompatibility of a general set of quantum observables. 
\begin{lemma}\label{lem:Qc_gen}
Consider a set $\cX$ of $N$ observables on a $d$-dimensional quantum system and let $\cS$ denote the corresponding ensemble of eigenstates. The incompatibility $\cQ_{c} (\cX)$ can be evaluated as,
\begin{equation}
\cQ_{c}(\cX) = 1 - \cF_{\rm avg} (\cS,\cB_{\rm opt}, p_{\rm opt}, q_{\rm opt} ),  \label{eq:Qc_gen}
\end{equation}
where, $\cB_{\rm opt}$ is the optimal cloning basis that maximises the function $A(\cS,\cB)$ defined in Eq.~\eqref{eq:A_B2}. $q_{\rm opt}$ characterizes the optimal quantum cloner and is evaluated as,
\begin{equation}
q_{\rm opt} = \frac{1}{2\sqrt{d-1}}\sqrt{1 - \frac{{\rm sgn} \left(\frac{A_{\rm opt}}{M} -\frac{1}{2}\right)}{\sqrt{1 + (\cG(M,d))^{2}}}}, \label{eq:qopt}
\end{equation}
where $M=Nd$ ${\rm sgn}(x)$ is the signum function defined as 
\[{\rm sgn}(x) = \bigg\lbrace \begin{array}{cc} 
-1 & {\rm if} \; x < 0, \\
0 & {\rm if} \; x = 0, \\
+1 & {\rm if} \; x > 0. 
\end{array} \]
The function $\cG(\cS, N,d)$ is defined as 
\begin{equation}
\cG(\cS,N,d) = \frac{4(A_{\rm opt}(\cS) - Nd)}{(Nd-2A_{\rm opt}(\cS))\sqrt{2(d-1)}}, \label{eq:G_main}
\end{equation}
and $A_{\rm opt}(\cS) = A(\cS, \cB_{\rm opt}) \equiv \max_{\cB}A(\cS,\cB)$ is the maximum value of the quantity $A(\cS,\cB)$, attained for the optimal cloning basis $\cB_{\rm opt}$.
\end{lemma}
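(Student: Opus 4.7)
The plan is to carry out the double optimisation defining $\cF_{\rm opt}$ (Eq.~\eqref{eq:opt_cloning_fid}) explicitly, starting from the closed-form expression for the average cloning fidelity in Eq.~\eqref{eq:avg_fid2} and using that the eigenstate ensemble contains $M = Nd$ states. First I separate the two optimisations. In the chosen regime $p^{2}>2pq$, the factor $(p^{2}-2pq)$ multiplying $A(\cS,\cB)/M$ in Eq.~\eqref{eq:avg_fid2} is strictly positive, so for any fixed admissible $(p,q)$ the maximum of $\cF_{\rm avg}$ over $\cB$ is attained at the basis $\cB_{\rm opt}$ that maximises $A(\cS,\cB)$. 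Writing $a := A_{\rm opt}(\cS)/(Nd)$ then leaves a function of $(p,q)$ alone.

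Next I reduce to a single variable. The unitarity constraint gives $p = \sqrt{1-2(d-1)q^{2}}$, and the fidelity becomes
\[f(q) = a + (1-2a)(d-1)q^{2} + 2(1-a)\,p(q)\,q.\]
Setting $f'(q)=0$ and using $dp/dq = -2(d-1)q/p$, the stationarity condition rearranges into
\[(2a-1)(d-1)\,q\,p = (1-a)\bigl(1-4(d-1)q^{2}\bigr).\]
Introducing $y := 2(d-1)q^{2}$ and squaring this equation produces a quadratic in $y$,
\[(K+4)\,y^{2}-(K+4)\,y+1=0, \qquad K := \frac{(2a-1)^{2}(d-1)}{2(1-a)^{2}},\]
whose roots are $y = \frac{1}{2}\pm\frac{1}{2}(1+\cG^{2})^{-1/2}$. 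A short algebraic check shows $\cG^{2}=4/K$ with $\cG$ exactly the quantity defined in Eq.~\eqref{eq:G_main}, and substituting back into $q_{\rm opt}^{2}=y/[2(d-1)]$ reproduces the form stated in Eq.~\eqref{eq:qopt} up to the choice of sign.

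The main obstacle will be to identify the correct branch and confirm that it is a maximum lying in the admissible region. Since squaring was used, only the root consistent with the unsquared stationarity condition is genuine: $(2a-1)qp$ and $1-2y$ must share a sign, forcing $y<1/2$ when $a>1/2$ and $y>1/2$ when $a<1/2$. This is exactly what the factor ${\rm sgn}(a-1/2)$ in Eq.~\eqref{eq:qopt} encodes, and at $a=1/2$ the formula collapses correctly to $q_{\rm opt}=1/[2\sqrt{d-1}]$. I will close the argument with a second-derivative test confirming that the selected critical point is a maximum of $f$, and by verifying that $q_{\rm opt}$ lies strictly inside the admissible interval $p^{2}>2pq$, i.e.\ that $y<(d-1)/(d+1)$, which is the value at which the boundary $p=2q$ (the universal cloner) is reached. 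The first assertion of the lemma, $\cQ_{c}(\cX)=1-\cF_{\rm avg}(\cS,\cB_{\rm opt},p_{\rm opt},q_{\rm opt})$, is then immediate from Eq.~\eqref{eq:Qc_defn} together with the definition of $\cF_{\rm opt}$.
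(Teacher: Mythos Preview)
Your proposal is correct and follows essentially the same route as the paper's proof in Appendix~\ref{sec:Qc_gen}: first fix $\cB_{\rm opt}$ by maximising $A(\cS,\cB)$ (using that $p^{2}-2pq>0$ in the chosen regime), then eliminate $p$ via the unitarity constraint and optimise the resulting one-variable function. The only cosmetic difference is that the paper parametrises via $\sqrt{2(d-1)}\,q=\sin\theta$, so that the stationarity condition reads $\tan 2\theta=\cG$ and the branch is chosen from the sign of the second derivative, whereas you work algebraically with $y=2(d-1)q^{2}$, square to obtain the quadratic $(K+4)y^{2}-(K+4)y+1=0$, and pick the branch by sign-consistency of the unsquared equation; these two bookkeeping devices are equivalent.
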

Since $p$ and $q$ are related via Eq.~\eqref{eq:unitarity}, knowing the optimal value $q_{\rm opt}$ gives us $p_{\rm opt}$ as well. We refer to Appendix~\ref{sec:Qc_gen} for the detailed calculation leading to the above expression. 

\subsection{Optimal cloning fidelity and incompatibility of MUBs}\label{sec:fopt_Qc_MUB}

We next use the general expression for the optimal cloning fidelity derived above, to find the optimal QCM of a set of mutually unbiased observables. Let $\cX_{\rm MUB} = \{X^{1}, X^{2}, \ldots, X^{N}\}$ denote a set of $N\leq d+1$ MUBs  in $d$-dimensions. Let $\cS_{\rm MUB}$ denote the corresponding eigenstate ensemble, where every state belonging to every basis is picked with a uniform probability of $\frac{1}{Nd}$. 
\begin{lemma}[$\cF_{\rm opt}$ for MUBs] \label{lem:Qc_MUB} The optimal cloning fidelity for the eigenstate ensemble $\cS_{MUB}$ of a set of $N\leq d+1$ MUBs in $d$-dimensions can be evaluated as,
\begin{equation}
\cF_{\rm opt}(\cS_{\rm MUB}) =\cF_{\rm avg}\left(\cS_{\rm MUB}, \cB_{\rm opt} = X^{i}, q_{\rm opt} \right)  ,
\end{equation}
where, $q_{\rm opt}$ is evaluated as in Eq.~\eqref{eq:qopt} above, with $A_{\rm opt} (\cS_{\rm MUB}) = N+d -1$, and the optimal cloning basis $\cB_{\rm opt} = X^{i}$ being any of the bases in the set $\cX$. 
\end{lemma}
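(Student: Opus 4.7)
My plan is to invoke Lemma~\ref{lem:Qc_gen}, which expresses $\cF_{\rm opt}(\cS_{\rm MUB})$ entirely in terms of the single quantity $A_{\rm opt}(\cS_{\rm MUB}) = \max_{\cB} A(\cS_{\rm MUB}, \cB)$. The task therefore reduces to showing that this maximum equals $N+d-1$ and is attained when $\cB$ is chosen to be any one of the MUBs $X^{i}$ in the set. Once that is established, substituting $A_{\rm opt} = N+d-1$ (with $M = Nd$) into Eqs.~\eqref{eq:qopt} and \eqref{eq:G_main} immediately delivers $q_{\rm opt}$, and the unitarity constraint~\eqref{eq:unitarity} then fixes $p_{\rm opt}$.

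First I would establish the value $N+d-1$ by direct evaluation at $\cB = X^{k}$. Splitting $A(\cS_{\rm MUB}, \cB) = \sum_{l=1}^{N} f_{l}(\cB)$ with $f_{l}(\cB) = \sum_{i,m} |\langle e_{i}|\psi^{l}_{m}\rangle|^{4}$, the $l = k$ term contributes $\sum_{i,m}\delta_{im} = d$ by orthonormality, while each $l \neq k$ term contributes $\sum_{i,m}(1/d)^{2} = 1$ by mutual unbiasedness. Adding the $N$ pieces yields $A(\cS_{\rm MUB}, X^{k}) = d + (N-1) = N+d-1$.

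For the matching upper bound I would lift the sum to $\cH \otimes \cH$ by writing $A(\cS_{\rm MUB}, \cB) = \tr[D_{\cB} M_{N}]$, where $D_{\cB} = \sum_{i}|e_{i}e_{i}\rangle\langle e_{i}e_{i}|$ is the rank-$d$ ``diagonal'' projector associated with $\cB$, $W_{l} = \sum_{m} |\psi^{l}_{m}\psi^{l}_{m}\rangle\langle \psi^{l}_{m}\psi^{l}_{m}|$, and $M_{N} = \sum_{l=1}^{N} W_{l}$. The key step is to extend $X^{1}, \ldots, X^{N}$ to a complete set of $d+1$ MUBs and invoke the standard 2-design identity $\sum_{l=1}^{d+1} W_{l} = \Id + F$, with $F$ the swap operator. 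This gives $M_{N} = (\Id + F) - \sum_{l=N+1}^{d+1} W_{l}$. A direct computation using $F|e_{i}e_{i}\rangle = |e_{i}e_{i}\rangle$ shows $\tr[D_{\cB}(\Id + F)] = 2d$, while the Cauchy--Schwarz lower bound $\tr[D_{\cB}W_{l}] = f_{l}(\cB) \geq 1$ (valid for any basis $\cB$ and any ONB $X^{l}$, since the doubly stochastic matrix with entries $|\langle e_{i}|\psi^{l}_{m}\rangle|^{2}$ has Frobenius norm squared at least $1$) yields
\[ A(\cS_{\rm MUB}, \cB) \leq 2d - (d+1-N) = N+d-1. \]
Equality holds when $\cB$ is mutually unbiased to each of the added bases $X^{N+1},\ldots,X^{d+1}$, and in particular when $\cB = X^{k}$ for some $k \leq N$, which closes the two-sided argument and identifies $\cB_{\rm opt} = X^{i}$ as a maximizer.

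The main obstacle is the completion step: the $2$-design identity requires a full set of $d+1$ MUBs extending the given $N$. This is guaranteed in prime-power dimensions by the Wootters--Fields construction but is not known to hold for all $d$. In dimensions that do not admit a completion, one would need an alternative route, for instance a direct spectral analysis of $M_{N}$ on the symmetric subspace of $\cH \otimes \cH$ showing that the sum of its top $d$ eigenvalues is exactly $N+d-1$. With $A_{\rm opt}(\cS_{\rm MUB}) = N+d-1$ in hand, the closed-form expression for $q_{\rm opt}$ and hence for $\cF_{\rm opt}(\cS_{\rm MUB})$ stated in the lemma follows immediately by substitution into Lemma~\ref{lem:Qc_gen}.
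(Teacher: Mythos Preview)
Your direct evaluation of $A(\cS_{\rm MUB},X^{k}) = N+d-1$ is exactly the computation the paper carries out in Appendix~\ref{sec:Qc_mub}, and your reduction to Lemma~\ref{lem:Qc_gen} is the same as the paper's. Where you differ is in the \emph{upper bound} on $A(\cS_{\rm MUB},\cB)$. The paper does not argue via $2$-designs; instead it observes (Prop.~\ref{prop:zero_q}) that $\max_{\cB}A(\cS_{\rm MUB},\cB)/(Nd)$ equals the best projective measure-and-reconstruct fidelity, which is trivially bounded above by $\mathbb{F}_{\rm opt}(\cS_{\rm MUB})$, and then invokes Lemma~\ref{lem:proj_opt} together with the result of~\cite{incompatibility_BM} that $\mathbb{F}_{\rm opt}(\cS_{\rm MUB}) = (N+d-1)/(Nd)$. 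This route works for \emph{any} set of $N$ MUBs in \emph{any} dimension $d$, with no completion hypothesis. Your $2$-design argument is self-contained and elegant, but, as you correctly note, it requires extending the given MUBs to a full set of $d+1$, which is only known to be possible in prime-power dimensions; in other dimensions the statement of the lemma still makes sense (whenever $N$ MUBs exist) but your proof would not cover it. So your approach trades generality for independence from the external citation, whereas the paper's approach is fully general at the cost of deferring the key inequality to~\cite{incompatibility_BM}.
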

Once again, we merely state the result here and refer to Appendix~\ref{sec:Qc_mub} for the details of the calculation.
Our solution for the optimal cloning fidelity for a set of $N\leq d+1$ MUBs in $d$-dimensions constitutes a key step in obtaining an upper bound on the cloning-based incompatibility measure, as shown in Sec.~\ref{sec:Qc_ub} below.  Before proceeding to prove an upper bound on $\cQ_{c}$ for a general set of observables, we would like to note an interesting corollary that arises from our calculations.

\begin{corollary}\label{cor:fopt_d+1}
The optimal quantum cloner for a full set of $N=d+1$ MUBs in $d$-dimensions, whenever it exists, is the universal, symmetric cloner originally introduced in~\cite{buzek_hillery}, achieving an optimal cloning fidelity equal to the universal cloning fidelity calculated in Eq.~\eqref{eq:fopt_univ}.
\end{corollary}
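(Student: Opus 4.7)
The plan is to verify that the general formula from Lemma~\ref{lem:Qc_MUB}, when specialized to $N=d+1$, collapses exactly to the universal cloner condition $p^2=2pq$, from which the universal fidelity value follows by Eq.~\eqref{eq:fopt_univ}.

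First, I would plug $N=d+1$ into the value of $A_{\rm opt}(\cS_{\rm MUB})=N+d-1$ given by Lemma~\ref{lem:Qc_MUB}, obtaining $A_{\rm opt}=2d$ and hence $A_{\rm opt}/M = 2d/(d(d+1)) = 2/(d+1)$. This is the single numerical input the whole argument needs. Note that the ordering of $A_{\rm opt}/M$ with $1/2$ depends on $d$: it is $>1/2$ for $d=2$, equal to $1/2$ for $d=3$, and $<1/2$ for $d\geq 4$, so the sign factor in Eq.~\eqref{eq:qopt} will have to be handled case-by-case. This is the one slightly delicate point of the proof, and the main bookkeeping obstacle.

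Next I would substitute $A_{\rm opt}=2d$ and $Nd=d(d+1)$ into the auxiliary function $\cG$ from Eq.~\eqref{eq:G_main}. A short simplification gives $1+\cG^2 = (d+1)^2/(d-3)^2$ for $d\neq 3$, so that $\sqrt{1+\cG^2}=(d+1)/|d-3|$. Combining this with the appropriate sign in Eq.~\eqref{eq:qopt}, I would show that in each of the three cases $d=2$, $d=3$, $d\geq 4$ the inner bracket $1-{\rm sgn}(\cdot)/\sqrt{1+\cG^2}$ equals $2(d-1)/(d+1)$, and therefore
\begin{equation}
q_{\rm opt}^2 \;=\; \frac{1}{4(d-1)}\cdot\frac{2(d-1)}{d+1} \;=\; \frac{1}{2(d+1)}.
\end{equation}

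Finally, using the unitarity constraint Eq.~\eqref{eq:unitarity} to recover $p_{\rm opt}=\sqrt{2/(d+1)}$, I would observe that $p_{\rm opt}=2q_{\rm opt}$, which is precisely the defining relation $p^2=2pq$ of the Bu\v{z}ek--Hillery universal symmetric cloner recalled in Sec.~\ref{sec:sym_cloning}. Invoking Eq.~\eqref{eq:fopt_univ} then immediately yields $\cF_{\rm opt}(\cS_{\rm MUB})=(d+3)/(2(d+1))=\cF_{\rm univ}(d)$, completing the proof. The only genuine computation is the case split for the sign, and everything else is algebraic substitution.
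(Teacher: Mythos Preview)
Your proposal is correct and follows essentially the same route as the paper: compute $A_{\rm opt}/M=2/(d+1)$ from Lemma~\ref{lem:Qc_MUB}, feed it into Eqs.~\eqref{eq:qopt}--\eqref{eq:G_main} to obtain $q_{\rm opt}^{2}=1/(2(d+1))$, and then use unitarity to recover $p^{2}=2pq$. Your explicit case split on ${\rm sgn}\!\left(\tfrac{A_{\rm opt}}{M}-\tfrac12\right)$ for $d=2$, $d=3$, and $d\geq 4$ is a welcome clarification that the paper's proof simply glosses over.
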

\begin{proof}
The result simply follows from the fact that $A_{\rm opt}(\cS_{\rm MUB}) = N+d -1$, for a set of $N \leq d+1$ MUBs in $d$-dimensions. Furthermore, as shown in Appendix~\ref{sec:Qc_mub}, the optimal cloning basis can be picked to be any of the MUBs in the set. For $N=d+1$, we therefore have, $\frac{A_{\rm opt}(\cS_{\rm MUB})}{Nd} = \frac{2}{d+1}$. Substituting this in the solution for $q_{\rm opt}$ described in Eqs.~\eqref{eq:qopt} and~\eqref{eq:G_main} above, we see that $q^{2}_{\rm opt}(\cS_{MUB}) = \frac{1}{2(d+1)}$ for the full set of $N=d+1$ MUBs. Using the unitarity condition in Eq.~\eqref{eq:unitarity}, we get the optimal value of the parameter $p$ as, $p_{\rm opt} = \frac{2}{d+1}$. The optimal QCM thus satisfies $p^{\rm opt} = 2p_{\rm opt}q_{\rm opt}$, which is the same as the universal, symmetric QCM described in~\cite{buzek_hillery} and the optimal cloning fidelity is indeed the universal cloning fidelity calculated in Eq.~\eqref{eq:fopt_univ}. 
\end{proof}

Although the form of $\cF_{\rm opt}(\cS_{\rm MUB})$ is not a simple function to write down, it can of course be computed easily for a given number $N$ and dimension $d$. Knowing the optimal fidelity, we can immediately calculate the mutual incompatibility $\cQ_{c}$ of a set of $N$ MUBs in $d$-dimensions. We have plotted $\cQ_{c} (\cX_{\rm MUB})$ as a function of $d$ and $N$, in Figs.~\ref{fig:Q_d} and~\ref{fig:Q_N} respectively.

\subsection{Upper bound on $\cQ_{c}$}\label{sec:Qc_ub}

We next show that there exists a non-trivial upper bound for the measure $\cQ_{c}(\cX)$, as a function of the number of observables $N$ in the set $\cX$ and the dimension $d$ of the space. Furthermore, whenever $N\leq d+1$, this upper bound is tight for a set of $N$ MUBs, thus reiterating the fact that mutually unbiased observables,whenever they exist, are maximally incompatible. 

\begin{theorem}[Upper bound on $\cQ_{c}$]\label{thm:Qc_ub}
Consider a set $\cX$ of $N$ observables on a $d$-dimensional quantum system. The measure $\cQ_{c} (\cX)$ satisfies,
\begin{eqnarray}
\cQ_{c}(\cX) &\leq& 1 - \cF_{\rm opt}(\cS_{\rm MUB}), \; N \leq d+1 \label{eq:Qc_ub1} \\
\cQ_{c}(\cX) &\leq & 1 - \cF_{\rm univ}(d), \qquad N \geq d +1 . \label{eq:Qc_ub2}
\end{eqnarray}
Here, $\cF_{\rm opt}(\cS_{\rm MUB})$ is the optimal cloning fidelity attained for a set of $N$ MUBs, and $\cF_{\rm univ}(d)$ is the cloning fidelity attained by a universal QCM in $d$-dimensions.

For $N \leq d+1$, equality is attained iff the set of observables in $\cX$ are mutually unbiased and the corresponding optimal QCM is characterised by $q=q_{\rm opt}$. For $N > d+1$, the upper bound is attained for a set of observables whose eigenstates form a unitarily invariant ensemble~\cite{Fuchs-Sasaki}.
\end{theorem}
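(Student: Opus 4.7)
Using Lemma~\ref{lem:Qc_gen} and Eq.~\eqref{eq:avg_fid2},
\[ \cF_{\rm avg}(\cS,\cB,p,q) = \frac{A(\cS,\cB)}{Nd}(p^{2} - 2pq) + 2pq + (d-1)q^{2}, \]
so the coefficient of $A(\cS,\cB)$ is strictly positive throughout the feasible region $p^{2} > 2pq$. Consequently $\cF_{\rm opt}(\cS) = \max_{\cB,(p,q)}\cF_{\rm avg}$ is a strictly increasing function of $A_{\rm max}(\cS) := \max_{\cB} A(\cS,\cB)$, since by the envelope theorem $\partial \cF_{\rm opt}/\partial A_{\rm max} = (p_{\rm opt}^{2} - 2p_{\rm opt}q_{\rm opt})/(Nd) > 0$. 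The theorem therefore reduces to (i) lower-bounding $A_{\rm max}(\cS)$ over all $\cX$, and (ii) characterising the sets $\cX$ that saturate this bound.

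\textbf{Case $N \leq d+1$.} Pick any $X^{k}\in\cX$ and take its eigenbasis as $\cB$; write $P^{kl}_{im} = |\langle\psi^{k}_{i}|\psi^{l}_{m}\rangle|^{2}$. The diagonal block $l=k$ contributes $\sum_{i,m}\delta_{im} = d$. For each $l\neq k$, $P^{kl}$ is doubly stochastic, so Cauchy--Schwarz gives $\sum_{i}(P^{kl}_{im})^{2}\geq (\sum_{i}P^{kl}_{im})^{2}/d = 1/d$ and hence $\sum_{i,m}(P^{kl}_{im})^{2}\geq 1$, with equality iff $P^{kl}_{im} = 1/d$ for all $i,m$, i.e.\ iff $X^{k}$ and $X^{l}$ are mutually unbiased. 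Summing the $N-1$ off-diagonal blocks with the diagonal one yields $A_{\rm max}(\cS)\geq A(\cS,X^{k})\geq N+d-1$. By Lemma~\ref{lem:Qc_MUB}, $A_{\rm opt}(\cS_{\rm MUB}) = N+d-1$, so strict monotonicity of $\cF_{\rm opt}$ in $A_{\rm max}$ delivers Eq.~\eqref{eq:Qc_ub1}. Equality forces $A(\cS,X^{k}) = N+d-1$ for every $k$, i.e.\ every Cauchy--Schwarz step is tight, which is exactly the MUB condition in Eq.~\eqref{eq:mub}.

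\textbf{Case $N \geq d+1$.} For any $\cS$ and any basis $\cB$, the boundary choice $p=2q$ (the universal symmetric cloner) gives $\cF_{\rm avg} = \cF_{\rm univ}(d)$ of Eq.~\eqref{eq:fopt_univ} independently of $\cS,\cB$, and approaching this boundary from inside $p^{2}>2pq$ produces $\cF_{\rm opt}(\cS)\geq \cF_{\rm univ}(d)$, proving Eq.~\eqref{eq:Qc_ub2}. For saturation, take an $\cX$ whose $Nd$ eigenstates form a unitarily invariant (rank-one complex projective 2-design) ensemble, so that $\sum_{l,m}|\psi^{l}_{m}\rangle\langle\psi^{l}_{m}|^{\otimes 2} \propto P_{\rm sym}$; a one-line 2-design calculation then forces $A(\cS,\cB) = 2Nd/(d+1)$ for \emph{every} $\cB$. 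Substituting $A/(Nd) = 2/(d+1)$ and optimising under the unitarity constraint $p^{2}+2(d-1)q^{2}=1$ (Lagrange multipliers, or directly using Eqs.~\eqref{eq:qopt}--\eqref{eq:G_main}) yields the unique interior critical point $p=2q$ and supremum $\cF_{\rm univ}(d)$.

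\textbf{Expected main obstacle.} The doubly-stochastic Cauchy--Schwarz and the envelope-monotonicity arguments are light; the delicate issues are two. First, one must ensure that the strict-monotonicity-in-$A_{\rm max}$ statement is actually rigorous even though $(p_{\rm opt},q_{\rm opt})$ themselves depend on $A_{\rm max}$ (the envelope theorem argument sketched above does this, but requires checking that the optimum is not on the boundary of the feasible region, which in turn depends on the sign of $A_{\rm opt}/M - 1/2$ appearing in Eq.~\eqref{eq:qopt}). Second, for $N>d+1$ one must actually exhibit observable sets whose eigenstates form a 2-design (for example, orbits of a fiducial basis under a group action, or random orthonormal bases producing an approximate 2-design), and verify that the open-region supremum is genuinely $\cF_{\rm univ}(d)$ and not strictly larger.
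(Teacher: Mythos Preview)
Your proposal is correct and shares the paper's core mechanism: both arguments hinge on the fact that $\cF_{\rm avg}$ depends on $\cS$ only through $A(\cS,\cB)$, with positive coefficient $p^{2}-2pq$, so lower-bounding $A_{\rm opt}(\cS)$ lower-bounds $\cF_{\rm opt}(\cS)$. The differences are in packaging and in the $N>d+1$ branch.

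For $N\leq d+1$, you supply a self-contained Cauchy--Schwarz/doubly-stochastic proof of $A_{\rm opt}(\cS)\geq N+d-1$ with the equality characterisation built in, whereas the paper simply imports this bound from~\cite{incompatibility_BM} and Lemma~\ref{lem:proj_opt}; your argument is the more transparent one here. Conversely, your envelope-theorem framing for monotonicity is more delicate than needed: the paper avoids it entirely by evaluating $\cF_{\rm avg}(\cS,\cdot,p,q)$ at the MUB's own optimal parameters $(p_{\rm opt},q_{\rm opt})$ and chaining
\[
\cF_{\rm opt}(\cS)\;\geq\;\sup_{\cB}\cF_{\rm avg}(\cS,\cB,p_{\rm opt},q_{\rm opt})\;\geq\;\sup_{\cB}\cF_{\rm avg}(\cS_{\rm MUB},\cB,p_{\rm opt},q_{\rm opt})\;=\;\cF_{\rm opt}(\cS_{\rm MUB}),
\]
which sidesteps the interior-vs-boundary concern you flag as an obstacle. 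For $N>d+1$, your shortcut of approaching the universal-cloner point $p=2q$ from inside the feasible region is more direct than the paper's route, which first invokes the Fuchs--Sasaki bound $A_{\rm opt}/(Nd)\geq 2/(d+1)$ and then solves the resulting one-variable optimisation in $q$ (arriving at the same $q_{\rm opt}=\sqrt{1/(2(d+1))}$). The paper's detour, however, makes the saturation condition (a unitarily invariant eigenstate ensemble, i.e.\ the Fuchs--Sasaki extremal case) fall out automatically, whereas you must argue it separately via the $2$-design identity.
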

We should note here that the bounds in Eq.~\eqref{eq:Qc_ub1} and Eq.~\eqref{eq:Qc_ub2} are the same when $N=d+1$. For $N< d+1$, the optimal cloning fidelity for a set of $N$ MUBs ($\cF_{\rm opt}(\cS_{\rm MUB})$) is always greater than the fidelity attained by a universal QCM ($\cF_{\rm univ}(d)$), so  the bound in Eq.~\eqref{eq:Qc_ub1} holds. Since the number of MUBs in $d$-dimensions is bounded by $d+1$, the bound in Eq.~\eqref{eq:Qc_ub2} takes over for $N>d+1$.

\begin{proof}
We first prove the case where $N \leq d+1$. The main intuition behind the upperbound is the fact that for a given QCM with parameters $(p,q)$, the average fidelity attained using the optimal cloning basis for any ensemble $\cS$ is always greater than that attained for an ensemble of MUBs. This is explicitly shown below.

Consider a set of $N \leq d+1$ observables $\cX$, in a $d$-dimensional space. The corresponding eigenstate ensemble $\cS$ has $Nd$ states. Comparing the average cloning fidelity attained for this ensemble using a symmetric QCM in basis $\cB$, with parameters $(p,q)$, with that attained for an ensemble $\cS_{\rm MUB}$ of $N$ MUBs, we have, 
\begin{align}
&\cF_{\rm avg}(\cS,\cB,q)-\cF_{\rm avg}(\cS_{MUB},\cB,q) \nonumber \\
&= \frac{(p^2-2pq)}{Nd} ( \cA(\cS,\cB) -\cA(\cS_{MUB},\cB) ).\nonumber
\end{align}
Recall that $\frac{\cA_{opt}(\cS)}{Nd}$ is simply the same as $\mathbb{F}(\cS, \cM_{\cB}, \cA_{\cB})$, the average fidelity obtained via a projective measurement followed by state reconstruction, as observed in Prop.~\ref{prop:zero_q}. Furthermore, we know that,
\[ \frac{\cA_{opt}(\cS)}{Nd} \geq \frac{N+d-1}{Nd}, \]
where the minimum is attained when the ensemble $\cS$ is in fact $\cS_{\rm MUB}$ as proved in~\cite{incompatibility_BM} and Lemma~\ref{lem:proj_opt}. Finally, since we define the optimal value of the cloning fidelity in the region of the parameter space where $p^{2} > 2pq$, we have,
\begin{align}
& \sup_{\cB}\cF_{\rm avg}(\cS,\cB,q)-\sup_{\cB}\cF_{\rm avg}(\cS_{MUB},\cB,q) \nonumber \\
&= \frac{(\cA_{opt}(\cS)-\cA_{opt}(\cS_{MUB}))(p^2-2pq)}{Nd}
& \geq 0. \label{eq:max_B}
\end{align}

Let $(p_{\rm opt},q_{\rm opt})$ be the optimal value of the parameters $(p,q)$ corresponding to the ensemble $\cS_{MUB}$. Then, the optimal cloning fidelity for an ensemble $\cS$ satisfies
\begin{align}
\cF_{\rm opt}(\cS)  & = \sup_{\cB,p,q}\cF_{\rm avg}(\cS,\cB,p,q) \nonumber \\
& \geq \sup_{\cB} \cF_{\rm avg}(\cS,\cB,p_{\rm opt},q_{\rm opt}) \nonumber \\
& \geq \sup_{\cB}\cF(\cS_{MUB},\cB,p_{\rm opt},q_{\rm opt}) \nonumber \\
& = \cF_{\rm opt}(\cS_{\rm MUB}),  \label{eq:fopt_lb}
\end{align}
where the last inequality follows from Eq.~\eqref{eq:max_B}.

The upper bound now simply follows from the definition of the  incompatibility measure $\cQ_{c}$ -- for any set of observables $\cX$, 
\begin{equation}
\cQ_c(\cX) = 1 - \cF_{\rm opt} (\cS) \leq 1 - \cF_{\rm opt}(\cS_{\rm MUB}). 
\end{equation}
It is easy to see that the sequence of inequalities collapses when the set $\cX \equiv \cX_{\rm MUB}$ is a set of MUB observables, showing that the bound is tight for $\cX_{\rm MUB}$.

In the case where $N> d+1$, we use a general bound on the optimal fidelity achieved via the measurement-and-reconstruction protocol described in Appendix~\ref{sec:Q_avgFid}. For any ensemble of states $\cS$ in a $d$-dimensional Hilbert space, the optimal fidelity $\mathbb{F}_{\rm opt}(\cS)$ (defined in Eq.~\eqref{eq: f_max}) satisfies~\cite{Fuchs-Sasaki},
\[ \mathbb{F}_{\rm opt}(\cS) \geq \frac{2}{d+1},\]
with the optimal measurement being a projective measurement in a random basis. Therefore, we see from the equality in Eq.~\eqref{eq:cloning_fid_equality} that, 
\begin{equation}
\frac{A_{\rm opt}(\cS)}{Nd} \equiv \max_{\cB}\mathbb{F}_{\rm avg} (\cS, \cM_{\cB}, \cA_{\cB}) \geq \frac{2}{d+1}.
\end{equation}
Recalling the definitions in Eqs.~\eqref{eq:avg_fid2}~\eqref{eq:opt_cloning_fid}, we can evaluate the optimal cloning fidelity for any ensemble of $Nd$ states in $d$-dimensions is given by,
\begin{eqnarray}
\cF_{\rm opt}(\cS) &\geq& \max_{q}\frac{1}{d+1}\bigg[ 2 + (d-1)(d-3)q^{2} \nonumber \\
&+&  2(d-1)q\sqrt{1 - 2(d-1)q^2} \bigg] .
\end{eqnarray} 
Solving this optimization problem, we get,
\begin{equation} 
q_{\rm opt}(\cS) = \sqrt{\frac{1}{2(d+1)}}, \label{eq:qopt_d+1}
\end{equation}
which matches the parameter $q$ for a universal, symmetric cloner in $d$-dimensions~\cite{buzek_hillery}. As shown in Appendix~\ref{sec:N_d+1}, this implies that the optimal cloning fidelity is bounded from below as,
\begin{equation}
\cF_{\rm opt}(\cS) \geq \frac{d-3}{2(d+1)}. \label{eq:fopt_d+1}
\end{equation}
This gives the desired upper bound for a set of $N > d+1$ observables. The tightness of the bound follows from the construction of the optimal ensemble attaining the bound in~\cite{Fuchs-Sasaki}.
\end{proof}

\begin{center}
\begin{figure}[t]
\centering
\includegraphics[scale=0.48]{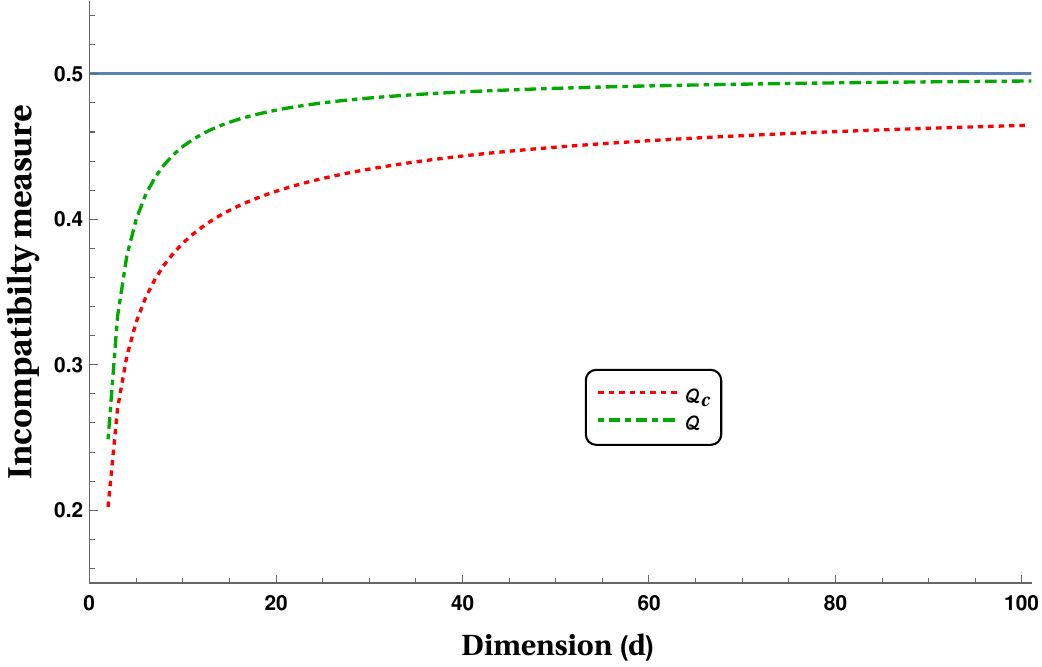}
\caption{Incompatibility of a pair of MUBs as a function of dimension $d$. $\cQ_{c}$ denotes the cloning-based incompatibility measure and $\cQ$ denotes the measure based on measurement-and-reconstruction.}
\label{fig:Q_d}
\end{figure}
\end{center}

To gain further insight into the behaviour of the cloning-based measure $\cQ_{c}$, we plot the upper bound proved in Eq.~\eqref{eq:Qc_ub1}, as function of the number of observables $N$ as well as the dimension $d$. Fig.~\ref{fig:Q_d} plots the mutual incompatibility, as quantified by the measure $\cQ_{c}$, of a pair of MUBs ($N=2$) as a function of the dimension $d$.  For comparison, we have also plotted the corresponding upper bound on the measure $\cQ$ which comes from a measurement-and-reconstruction strategy, as discussed in Appendix.~\ref{sec:Q_avgFid}. 

Fig.~\ref{fig:Q_N} shows how the incompatibility $\cQ_{c}$ grows as a function of the number of MUBs $N$, while keeping the system dimension fixed ($d=11$). Once again we have plotted the measurement-and-reconstruction based measure $\cQ$ for comparison. 

In both cases, the upper bound on the cloning-based incompatibility measure $\cQ_{c}$ is always lower than the bound on the measure $\cQ$. In other words, the optimal cloning fidelity for a set of MUBs is always higher than the best fidelity achievable via a measurement-and-reconstruction strategy. This is to be expected since the optimal fidelity in a measurement-and-reconstruction protocol for a set of MUBs  is always achieved by a projective measurement. And Prop.~\ref{prop:zero_q} implies that whenever the optimal measurement is a projective measurement, the measurement-and-reconstruction strategy is a special case of the symmetric QCM with parameters $p=1,q=0$. The optimal cloning fidelity, on the other hand is calculated by optimizing over all values of $(p,q)$ and is therefore always higher, leading to the fact that $\cQ_{c}$ is always lower than $\cQ$ for a set of $N$ MUBs in $d$ dimensions. In the context of quantum key distribution (QKD) this reiterates the fact that an optimal cloning attack would give the eavesdropper a better fidelity than an intercept-and-resend attack, when the signal states are drawn from a uniform ensemble of MUBs.

\subsection{Example: the case of two qubit observables}\label{sec:qubit}

\begin{center}
\begin{figure}[t!]
\includegraphics[scale=0.48]{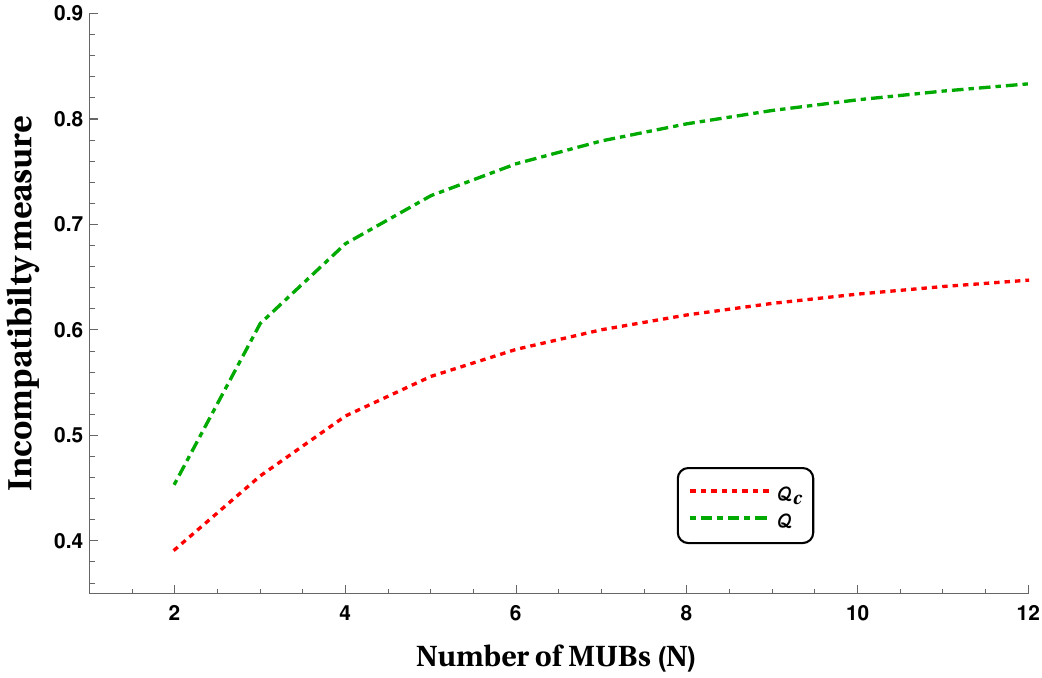}
\caption{Incompatibility of a set of $N$ MUBs in $d=11$ dimensions. $\cQ_{c}$ denotes the cloning-based incompatibility measure and $\cQ$ denotes the measure based on measurement-and-reconstruction.}
\label{fig:Q_N}
\end{figure}
\end{center}

We now present a simple, yet insightful example of how our results maybe applied to calculate the incompatibility of an arbitrary set of quantum observables. Specifically, we consider a pair of spin-$\frac{1}{2}$ observables, $A$ and $B$, which maybe parametrized in terms of unit vectors $\vec{a},\vec{b} \in \mathbb{R}^{3}$ on the Bloch sphere, as,
\[A = \alpha_{1} I + \alpha_{2}\vec{a}.\vec{\sigma}, \; B =  \beta_{1} I + \beta_{2}\vec{b}.\vec{\sigma},\]
where $\{\alpha_{i},\beta_{i}\}$ are real parameters and $\vec{\sigma} \equiv (\sigma_{X}, \sigma_{Y},\sigma_{Z})$ denotes the vector of Pauli operators. The mutual incompatibility of such a pair of qubit observables can be quantified using our formalism, as follows. 

Let $\ket{a_{\pm}}$ and $\ket{b_{\pm}}$ denote the eigenstates of the spin-$\frac{1}{2}$ observables $A$ and $B$ respectively. Then,
\begin{eqnarray}
\ket{a_{\pm}}\bra{a_{\pm}} &=& \frac{\Id\pm\vec{a}.\vec{\sigma}}{2} \nonumber \\
\ket{b_{\pm}}\bra{b_{\pm}} &=& \frac{\Id\pm\vec{b}.\vec{\sigma}}{2} . \nonumber
\end{eqnarray}
The first step in calculating the incompatibility $\cQ_{c}$ is to compute the optimum over all basis choices $\cB$, of the quantity $A(\cS_{2},\cB)$, where $\cS_{2}$ is the eigenstate ensemble,
\begin{equation}
\cS_{2} = \left\{ \frac{1}{4}, \ket{a_{\pm}}\bra{a_{\pm}}, \ket{b_{\pm}}\bra{b_{\pm}}\right\}. \label{eq:S_qubit}
\end{equation}
Once we solve for $A_{\rm opt}(\cS_{2}) = \max_{\cB}A_{\rm opt} (\cS_{2}, \cB)$, we can use the expressions from Lemma~\ref{lem:Qc_gen} to obtain the average cloning fidelity and hence the incompatibility $\cQ_{c}$. The optimal QCM for a pair of qubit observables is described in the following lemma. 

\begin{lemma}\label{lem:qubit}
For the ensemble $\cS_{2}$ defined in Eq.~\eqref{eq:S_qubit}, the optimal cloner has parameters $(p_{\rm opt}, q_{\rm opt})$ where,
\begin{eqnarray}
q_{\rm opt} &=& \frac{1}{2}\sqrt{1 - \frac{1}{\sqrt{1 + (\cG(\vert\vec{a}.\vec{b}\vert))^{2}}}}, \nonumber \\
\cG(\vert\vec{a}.\vec{b}\vert) &=& \frac{\sqrt{2}(1-\vert\vec{a}.\vec{b}\vert)}{1 + \vert\vec{a}.\vec{b}\vert}. \label{eq:qopt_qubit}
\end{eqnarray}
The optimal cloning bases are characterized by the Bloch vectors 
\begin{equation}
\vec{r}_{\pm} = \frac{\vec{a} \pm \vec{b}}{\vert \vec{a} \pm \vec{b}\vert} . \label{ropt_qubit}
\end{equation}

\end{lemma}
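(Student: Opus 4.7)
The plan is to apply Lemma~\ref{lem:Qc_gen} directly: the only genuinely ensemble-specific computation is $A_{\rm opt}(\cS_{2})$ and the corresponding optimal cloning basis $\cB_{\rm opt}$. Once those are in hand, the formulas for $q_{\rm opt}$ and $\cG$ come from plugging $N=2$, $d=2$, $M=Nd=4$ into Eqs.~\eqref{eq:qopt} and~\eqref{eq:G_main}.

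First I would parametrise an arbitrary qubit basis $\cB$ by its Bloch vector $\vec{r}\in\mathbb{R}^{3}$ with $|\vec{r}|=1$. For a pure qubit state with Bloch vector $\vec{n}$, expanding in $\cB$ gives $|\alpha_{1}|^{2}=(1+\vec{n}\cdot\vec{r})/2$ and $|\alpha_{2}|^{2}=(1-\vec{n}\cdot\vec{r})/2$, so the single-state participation ratio is $A_{m}=|\alpha_{1}|^{4}+|\alpha_{2}|^{4}=\tfrac{1}{2}(1+(\vec{n}\cdot\vec{r})^{2})$. Summing over the four states of $\cS_{2}$, whose Bloch vectors are $\pm\vec{a}$ and $\pm\vec{b}$, the sign drops out in the square and I obtain
\begin{equation}
A(\cS_{2},\cB) \;=\; 2 + (\vec{a}\cdot\vec{r})^{2} + (\vec{b}\cdot\vec{r})^{2} \;=\; 2 + \vec{r}^{\,T}\bigl(\vec{a}\vec{a}^{\,T}+\vec{b}\vec{b}^{\,T}\bigr)\vec{r}.
\end{equation}

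Next I would maximise this quadratic form over unit vectors $\vec{r}$. The $3\times 3$ matrix $\vec{a}\vec{a}^{\,T}+\vec{b}\vec{b}^{\,T}$ is supported on the plane spanned by $\vec{a},\vec{b}$ with trace $|\vec{a}|^{2}+|\vec{b}|^{2}=2$ and determinant (on that plane) $|\vec{a}\times\vec{b}|^{2}=1-(\vec{a}\cdot\vec{b})^{2}$, so its eigenvalues are $1\pm|\vec{a}\cdot\vec{b}|$. Direct substitution verifies that $\vec{a}\pm\vec{b}$ are eigenvectors with eigenvalues $1\pm\vec{a}\cdot\vec{b}$, so the maximum eigenvalue $1+|\vec{a}\cdot\vec{b}|$ is attained at $\vec{r}_{+}$ when $\vec{a}\cdot\vec{b}\geq 0$ and at $\vec{r}_{-}$ when $\vec{a}\cdot\vec{b}\leq 0$ (and both $\vec{r}_{\pm}$ appear as extremal directions of the form, one maximum and one minimum; the sign ambiguity in $\cS_{2}$ makes the overall bases $\vec{r}_{\pm}$ interchangeable). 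This gives $A_{\rm opt}(\cS_{2}) = 3+|\vec{a}\cdot\vec{b}|$.

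Finally I would substitute $A_{\rm opt}=3+|\vec{a}\cdot\vec{b}|$, $Nd=4$, and $d=2$ into Eq.~\eqref{eq:G_main}. Simplifying,
\begin{equation}
\cG \;=\; \frac{4\bigl(|\vec{a}\cdot\vec{b}|-1\bigr)}{-2\bigl(1+|\vec{a}\cdot\vec{b}|\bigr)\sqrt{2}} \;=\; \frac{\sqrt{2}\bigl(1-|\vec{a}\cdot\vec{b}|\bigr)}{1+|\vec{a}\cdot\vec{b}|},
\end{equation}
which is exactly the stated $\cG(|\vec{a}\cdot\vec{b}|)$. Since $A_{\rm opt}/M\geq 3/4>1/2$, the signum in Eq.~\eqref{eq:qopt} is $+1$, and with $d=2$ the prefactor $1/(2\sqrt{d-1})$ equals $1/2$, reproducing the claimed expression for $q_{\rm opt}$. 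The rest (i.e.\ $p_{\rm opt}$) follows from unitarity Eq.~\eqref{eq:unitarity}. The only non-routine step is the eigenvector identification in the second paragraph; everything else is bookkeeping, so I do not expect a serious obstacle.
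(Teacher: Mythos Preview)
Your proof is correct and follows the same overall strategy as the paper: compute $A(\cS_{2},\cB)=2+(\vec a\cdot\vec r)^{2}+(\vec b\cdot\vec r)^{2}$, maximise over $\vec r$ to get $A_{\rm opt}=3+|\vec a\cdot\vec b|$, and then feed this into the general formulas of Lemma~\ref{lem:Qc_gen}.

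The one substantive difference is in how the maximisation over $\vec r$ is carried out. The paper first argues (citing prior literature on entropic uncertainty for qubits) that the optimal $\vec r$ must lie in the plane of $\vec a$ and $\vec b$, then parametrises by the in-plane angle $\theta$ and solves $\frac{d}{d\theta}[\cos^{2}\theta+\cos^{2}(\theta-\gamma)]=0$ to find $\theta_{\rm opt}=\gamma/2$ or $\gamma/2+\pi/2$. Your quadratic-form argument, diagonalising $\vec a\vec a^{T}+\vec b\vec b^{T}$, is cleaner: coplanarity is automatic because the matrix has rank two, the eigenvectors $\vec a\pm\vec b$ are identified by a one-line check, and the eigenvalues $1\pm\vec a\cdot\vec b$ drop out without any calculus. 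The paper's route is more elementary in tools but leans on an external coplanarity argument; yours is self-contained and shorter.
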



\begin{proof}

Consider a projective measurement along an arbitrary direction $\vec{r}$ in the Bloch sphere. The measurement basis $\cB$ corresponding to such a measurement can be described by a pair of orthonormal vectors $|\psi_{\pm}\rangle$ given by,
\begin{equation}
\ket{\psi_{\pm}}\bra{\psi_{\pm}}=\frac{\Id \pm \vec{r}.\vec{\sigma}}{2} .
\end{equation}
The probabilities of obtaining outcomes $\pm$ are given by, 
\begin{eqnarray}
 p(+) _{\ket{a_{\pm}}} &=& \left[\frac{1\pm\vec{r}.\vec{a}}{2}\right], \; p(-)_{\ket{a_{\pm}}} = \left[\frac{1\mp\vec{r}.\vec{a}}{2}\right] \nonumber \\
 p(+)_{\ket{b_{\pm}}} &=& \left[\frac{1\pm\vec{r}.\vec{b}}{2}\right], \;  p(-)_{\ket{b_{\pm}}} \left[\frac{1\mp\vec{r}.\vec{b}}{2}\right]. \nonumber 
\end{eqnarray}
Therefore, the quantity $A (\cS_{2},\cB)$, is given by,
\begin{align}
A(\cS_{2},\cB) &=  p^{2}(+)_{\ket{a_{\pm}}} + p^{2}(-)_{\ket{a_{\pm}}} \nonumber \\
&+  p^{2}(+)_{\ket{b_{\pm}}} + p^{2}(-)_{\ket{b_{\pm}}} \nonumber\\
&= 2\left[\frac{1+\vec{r}.\vec{a}}{2}\right]^2+2\left[\frac{1-\vec{r}.\vec{a}}{2}\right]^2 \nonumber \\
&+ 2\left[\frac{1+\vec{r}.\vec{b}}{2}\right]^2 + 2\left[\frac{1-\vec{r}.\vec{b}}{2}\right]^2 \nonumber\\
&= 2+ (\vec{r}.\vec{a})^2+ (\vec{r}.\vec{b})^2 .
\end{align}
The optimal cloning basis is characterized by the vector $\vec{r}$ for which $A(\cS_{2},\cB)$ attains its maximum value. Therefore,
\[
A_{\rm opt}(\cS_{2}) = \max_{\vec{r}} \left[ 2+ (\vec{r}.\vec{a})^2+ (\vec{r}.\vec{b})^2  . \right] \]
Following earlier works relating to entropic uncertainty bounds for qubit observables~\cite{ghirardi, bosyk, PM_MS}, we first argue that the maximum is attained when the vector $\vec{r}$ is coplanar with $\vec{a}$ and $\vec{b}$~\footnote{Given any vector $\vec{v}_{\perp}$ in a plane perpendicular to the plane containing $\vec{a}$ and $\vec{b}$, we can always find a corresponding vector $\vec{v}_{c}$ in the intersection of the two planes such that $\vert\vec{v}_{c}.\vec{a}\vert \geq \vert \vec{v}_{\perp}. \vec{a}\vert $. Since the function $x^{2}$ is monotonically increasing for $x >0$, we can restrict ourselves to maximizing over vectors in the plane containing $\vec{a}$ and $\vec{b}$}.

Let the angle between $\vec{a}$ and $\vec{r}$ be $\theta$ and let $\gamma$ denote the angle between $\vec{a}$ and $\vec{b}$. Then, $\cos\theta = \vec{a}.\vec{r}$, $\cos\gamma = \vec{a}.\vec{b}$, and since $\vec{a}, \vec{b}$ and $\vec{r}$ are coplanar, $\vec{r}.\vec{b} = \cos(\theta-\gamma)$. Then, 
\[A_{\rm opt}(\cS_{2},\cB) = \max_{\theta} \left[ 2 + \cos^{2}\theta + \cos^2(\theta-\gamma) \right].  \]
Taking the derivative with respect to $\theta$, we see that the extremal values are attained for $\tan2\theta_{\rm opt} = \tan\gamma$, implying that $\theta_{\rm opt} = \frac{\gamma}{2} + n\frac{\pi}{2}$. Checking the second derivative, we see that $A_{\rm opt}$ is maximized for $\theta_{\rm opt} = \frac{\gamma}{2}$ when $0\leq \gamma < \frac{\pi}{2}$ and $\theta_{\rm opt} = \frac{\gamma}{2} + \frac{\pi}{2}$ when $\frac{\pi}{2} < \gamma \leq \pi$. For $\gamma=\frac{\pi}{2}$,  $A(\cS_{2},\cB)=3$, which is independent of $\theta$ and so $A(\cS_{2},\cB)$ is optimal for any basis $\cB$ corresponding to a vector $\vec{r}$ coplanar with $\vec{a}$ and $\vec{b}$.  Therefore, 
\begin{equation}
A_{\rm opt} (\cS_{2}) = \bigg\lbrace \begin{array} {cc} 2\left( 1 + \cos^{2}\frac{\gamma}{2}\right), & 0\leq\gamma< \frac{\pi}{2},  \\
2\left(1 +\sin^{2}\frac{\gamma}{2} \right), & \frac{\pi}{2} < \gamma \leq \pi ,\\
3 , & \gamma=\frac{\pi}{2}. 
\end{array}
\end{equation}
In terms of the vectors $\vec{a}, \vec{b}$ characterizing the ensemble $\cS_{2}$,  we have,
\begin{equation}
A_{\rm opt}(\cS_{2}) = 3 + \vert \vec{a}.\vec{b}\vert. \label{eq:A_qubit}
\end{equation}
For $\gamma\neq\frac{\pi}{2}$, corresponding to the two optimal values $\theta_{\rm opt}=\frac{\gamma}{2}, \frac{\gamma}{2} + \frac{\pi}{2}$, we see that the Bloch vectors $\vec{r}_{\pm}$ corresponding to the optimal cloning bases are, 
\begin{equation}
\vec{r}_{\pm} = \frac{\vec{a} \pm \vec{b}}{\vert \vec{a} \pm \vec{b}\vert} .
\end{equation}
Using the value of $A_{\rm opt}(\cS_{2})$ in Eqs.~\eqref{eq:qopt} and~\eqref{eq:G_main}, we get the desired parameters of the optimal QCM.
\end{proof}

Having characterized the optimal QCM for a pair of qubit observables, we can then obtain an expression for the average cloning fidelity $\cF_{\rm avg}(\cS_{2})$ and the incompatibility measure $\cQ_{c}(\cS_{2})$ as a function of the inner product $\vec{a}.\vec{b}$. These expressions are evaluated in Appendix~\ref{sec:qubit_app}. The incompatibility measure is indeed monotonic in the inner product $\vec{a}.\vec{b}$, as shown in the plot in figure \eqref{fig:Q_N}.

\begin{center}
\begin{figure}[t!]
\includegraphics[scale=0.54]{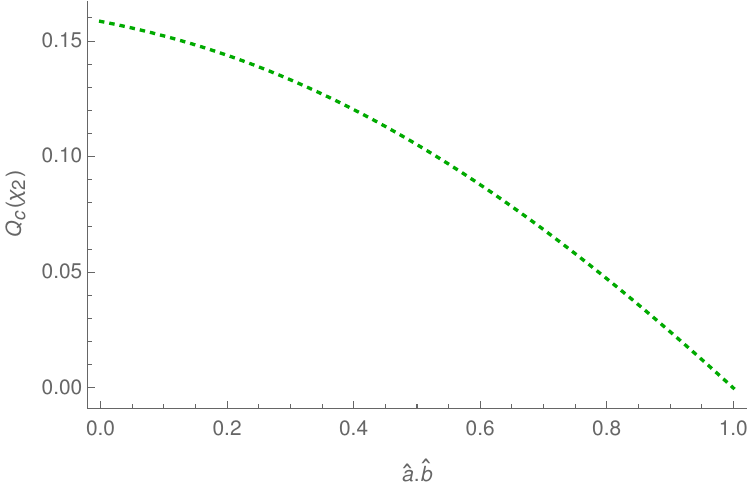}
\caption{Incompatibility of a pair of dichotomic qubit observables ($N=2$, $d=2$), as a function of the overlap of their Bloch vectors ($\vec{a}.\vec{b}$).}
\label{fig:Q_N}
\end{figure}
\end{center}

A quick check reveals that our solution for the optimal QCM for a pair of dichotomic observables matches with the known limiting cases. When $\vec{a}.\vec{b} = 0$, the observables $A$ and $B$ are mutually unbiased and from Eq.~\eqref{eq:A_qubit}, we get, 
\[ \frac{A_{\rm opt}(\cS_{2})}{Nd} = \frac{3}{4} = \frac{N + d-1}{Nd},\] with $N=2$, $d=2$, as expected. When $\vert\vec{a}.\vec{b}\vert = 1$, the observables $A$ and $B$ commute. Our solution gives$\frac{A_{\rm opt}(\cS_{2})}{Nd} = 1$, which implies that the states in the ensemble can be discriminated perfectly using a projective measurement. Furthermore, the optimal cloner has $q_{\rm opt} = 0$ (since $\cG = 0$ in Eq.~\eqref{eq:qopt_qubit}) and $p_{\rm opt} = 1$, as expected from Lemma~\ref{lem:Qc_lb}. 

\section{Quantifying incompatibility of general quantum measurements}\label{sec:general}

Finally, we show that our cloning-based incompatibility measure can be generalized beyond the case of rank-one projective measurements (PVMs) to quantify the incompatibility of a set of general quantum measurements. Recall that the measure $\cQ_{c}$ is based on associating a set of quantum observables $\cX$ with a set of nondegenerate Hermitian operators, so that the corresponding ensemble of eigenstates $\cS$, simply becomes a set of rank-one projectors. Consider now, a set of degenerate observables $\cX = \{X_1,....,X_N\}$ on a Hilbert space $\cH$ of dimension $d$, such that the Hermitian operator $X_i$ has $m_i$ degenerate eigenspaces,  for all $i=1,\ldots,N$. 

We now define two ensembles associated with such a set $\cX$. Let $\cS_{\rm nd}$ denote the collection of eigenstates of the nondegenerate subspaces of all the observables in $\cX$, wherein each eigenstate is picked with a uniform probability of $\frac{1}{Nd}$. Similarly, let $\cS_{\rm d}$ denote the collection of eigenstates of the degenerate subspaces of all the observables in $\chi$, where the eigenstates are each picked with uniform probability $\frac{1}{Nd}$. Thus the complete eigenstate ensemble associated with a set of degenerate observables $\cX$ is given by $\cS = \cS_{\rm d}\cup \cS_{\rm nd}$. Clearly, $\cS_{\rm d}$ can be changed by choosing different sets of eigenstates spanning the degenerate subspaces and therefore $\cS$ can be changed by changing $\cS_{\rm d}$. Let $\mu$ be the set of all such unitarily equivalent eigenstate ensembles. So, for a set of degenerate observables, the modified measure of incompatibility is,
\begin{equation}
\cQ_{c}(\cX)=1 - \max_{\cS \in \mu}\cF_{\rm opt}(\cS).
\end{equation}
Clearly, $\cQ_{c}(\chi)=0$ for compatible set and $\cQ_{c}(\cX) > 0$ otherwise.


We may further generalise our measure to a a set of general quantum measurements described by a set of positive operator valued measures (POVMs), by invoking the Naimark dilation theorem. The Naimark extension has been used to characterize the incompatibility of a pair of POVMs~\cite{beneduci, haapasalo} in the past. Recently, in independent work, we have shown that any set of $N$ POVMs is compatible iff there exists a common Naimark extension of this set such that the corresponding  PVMs are pairwise commuting~\cite{naimark}. Let, $\cX=\{E_1,E_2,\ldots,E_N\}$ be a set of POVMs and $\Pi=\{P_1,\ldots,P_N\}$ be a corresponding set of PVMs constructed via a common Naimark extension (see~\cite{naimark} for the details). Since such a Naimark extension is not unique, one can indeed have different sets $\Pi$ corresponding to the set $\cX$. Let, $\nu$ be the set of all such Naimark extensions of $\cX$, obtained using a common ancilla space and a common ancilla state. Then we may define the incompatibility of the set $\cX$ as,

\begin{equation}
\cQ_{c}(\cX)=\max_{\Pi\in\nu}\cQ_{c}(\Pi).
\end{equation}

It follows from the central result proved in~\cite{naimark}, that $\cQ(\cX)=0$ for a compatible set and $\cQ(\cX) > 0$ otherwise.

\section{Conclusions}\label{sec:concl}

In summary, we have proposed a new approach to quantifying incompatibility based on symmetric quantum cloners. We establish a quantitative relation between the incompatibility of a set of quantum observables and the optimal fidelity with which their eigenstate ensembles can be cloned using a symmetric QCM. We show that our measure satisfies the desirable properties which any faithful measure of incompatibility should, namely, it vanishes only for a set of commuting observables and attains its maximum value for a set of mutually unbiased bases. 

Our investigation brings to light an interesting facet of the optimal cloning fidelity, namely, that the optimal cloning basis for a given ensemble of states is in fact the projective measurement that achieves the best fidelity in a measurement-and-reconstruction protocol.  We use this connection to fully characterize the optimal cloning machine corresponding to an arbitrary ensemble of quantum states. For example, we show that the optimal cloner for the complete set of $d+1$ MUBs in $d$-dimensions is simply the universal, symmetric QCM discussed in~\cite{buzek_hillery}. 

From a foundational point of view, the cloning-based incompatibility measure discussed here thus firms up a connection which has long been intuited, between two fundamental principles in quantum information theory, namely, the no-cloning principle and the existence of mutually incompatible observables. On the practical side, our results become relevant in the context of QKD, where it is important to know the optimal cloning fidelity that an eavesdropper may get, for a given ensemble of signal states prepared by the sender. Finally, we note that the optimal cloning fidelity provides an alternative characterization of the {\emph quantumness} of an ensemble, an idea which was originally defined using a measurement-and-reconstruction protocol~\cite{Fuchs-Sasaki}. An interesting direction for future work is to study how the loss of coherence~\cite{coherence} in the input ensemble impacts the optimal cloning fidelity. 

Our analysis shows that for those ensembles for which the optimal fidelity in a measurement-and-reconstruction protocol is achieved for a projective measurement, the measurement-and-reconstruction strategy is a special case of the $1\rightarrow 2 $ symmetric QCM. An interesting open question in this context is to ask whether a similar connection can be made for a more general measurement-and-reconstruction strategy, involving POVMs. This might require us to expand the definition of optimal cloning fidelity beyond the symmetric QCMs considered here.

In further work, we would like to extend our cloning-based approach to obtain a quantitative estimate of the incompatibility of certain exemplary sets of positive operator value measures (POVMs). A first step in this direction has already been made in our recent work on characterizing the incompatibility of POVMs via their Naimark extensions~\cite{naimark} into PVMs. The key question to be addressed in this context, is to identify a set of characteristic Naimark extensions which can be used to witness the incompatibility of a set of POVMs.


\section{Acknowledgements}
The authors would like to thank Prof. Sibasish Ghosh for several valuable discussions and inputs. PM is grateful to Prof. S. Lakshmibala for insightful early discussions. PM acknowledges financial support by the Department of Science and Technology, Govt. of India, under grant number DST/ICPS/QuST/Theme-3/2019/Q69.

\appendix

\section{Quantifying incomptibility via a measurement-and-reconstruction protocol}\label{sec:Q_avgFid}

We briefly discuss a recently proposed operational measure of incompatibility~\cite{incompatibility_BM}. The measure $\cQ$ defined in~\cite{incompatibility_BM} is based on a measurement and reconstruction protocol, somewhat reminiscent of a na\"ive cloning protocol, naturally leading to the question of how it would compare to a more general cloning strategy using, for example, a symmetric quantum cloner. Furthermore, in a departure from the more widely studied measures of incompatibility based on uncertainty relations, the $\cQ$ measure does not vanish for observables that commute over a subspace; it is zero only for observables that commute over the entire space.

Consider a set $\Pi=\{\Pi_1,\Pi_2,......\Pi_N\}$ of $N$ observables on a $d$-dimensional Hilbert space $\mathcal{H}$, and let $\Pi^j_i=\ket{\psi^j_i}\bra{\psi^j_i}$ denote the $j^{\rm th}$ eigenstate of the $i^{\rm th}$ observable $\Pi_i$. We may note here that our measure assumes the canonical association of a physical observable with a Hermitian operator. The measure $\cQ(\Pi)$ aims to quantify the incompatibility of the set $\Pi$ based on the non-orthogonality of the corresponding ensemble of eigenstates, as follows. 

Let $\cS(\Pi) \equiv \{\Pi^{j}_{i}, \, 1\leq i\leq N, \, 1\leq j\leq d\}$ denote the ensemble of eigenstates of the observables in the set $\Pi$. Consider a two-party protocol wherein the sender Alice draws states uniformly at random from the set $\cS(\Pi)$ so that the probability of a given state is $\frac{1}{Nd}$, and transmits them. The receiver Bob makes some generalised measurement $\cM = \{M_a\}$ and adopts a reconstruction strategy $\cA$, whereby, he reconstructs the state $\sigma_a$ when he gets outcome $a$. Now the \emph{average fidelity} between the final states obtained by Bob via this measurement and reconstruction strategy, and the initial set of states sent by Alice is given by,

\begin{equation}
\mathbb{F}_{\rm avg}(\cS, \cM,\cA)=\frac{1}{Nd}\,\sum_{ija}\tr(\Pi^i_j M_a)\tr(\Pi^i_j\sigma_a) . \label{eq:fid_avg1}
\end{equation}

Maximizing over all measurements and state-reconstructions, we get the \emph{optimal} fidelity for the ensemble $\cS$ as,
\begin{equation}
\mathbb{F}_{\rm opt}(\cS) = \sup_{\cM} \sup_{\cA} \mathbb{F}_{\rm avg}( \cS, \cM, \cA) . \label{eq: f_max}
\end{equation}

We then define the measure $\cQ$ for the set $\Pi$ as,
\begin{equation}
\cQ ({\Pi}) = 1 - \mathbb{F}_{\rm opt}(\cS) . \label{eq: Q_measure}
\end{equation}

Clearly, for commuting observables, which share a common set of eigenstates, we can achieve $\mathbb{F}_{\rm opt}=1$ by making the choice $M_a,\sigma_a = \ket{\psi^a_i}\bra{\psi^a_i}$, where $|\psi^{a}_{i}\rangle$ is the $a^{\rm th}$ eigenstate of the $i^{\rm th}$ observable in the set $\Pi$. This in turn implies that $\cQ=0$ for a commuting set of observables. For any other set of observables $Q > 0$, as shown in~\cite{incompatibility_BM}, making this a true measure of incompatibility of any set of observables $\Pi$. It is further proved that for a set of $N$ observables in a $d$ dimensional Hilbert space,
\begin{equation}
\mathbb{F}_{\rm opt}(\cS) \geq \frac{N+d-1}{Nd} , \label{eq:F_bound}
\end{equation}
so that,
\begin{equation}
0 \leq \cQ(\Pi) \leq \left( 1- \frac{1}{N}\right)\left(1 - \frac{1}{d}\right).\label{eq:Q_ub}
\end{equation}
The upper bound is shown to be achieved for a set of mutually unbiased observables, namely, a set of observables $\{\Pi_{i}\}$ with the property, 
\begin{eqnarray}
\tr(\Pi^{i}_{j}\Pi^{k}_{j}) &=& \delta_{ik}, \, \forall j = 1,2,\ldots, N. \nonumber \\
\tr( \Pi^{i}_{j}\Pi^{k}_{l}) &=& \frac{1}{d}, \, \forall j \neq l , \, \forall \, i,k = 1,2,\ldots,d. \label{eq:mub}
\end{eqnarray}
While the optimal measurement and reconstruction strategy that attains $\mathbb{F}_{\rm opt}$ is not known in general, for such mutually unbiased bases (MUBs), we show that projective measurements are optimal in the sense that they achieve the lower bound given in Eq.~\eqref{eq:F_bound}. The formal statement and proof of this fact are given below.

\subsection{Optimal measurement and reconstruction strategy for MUBs}\label{sec:Fopt_MUB}

\begin{lemma}\label{lem:proj_opt}
For an ensemble of states comprising vectors from a set $\Pi$ of $N$ mutually unbiased bases (MUBs), the optimal fidelity $\mathbb{F}_{\rm opt}$ defined in Eq.~\eqref{eq: f_max} is achieved by a projective measurement, with the measurement basis chosen to be any of the bases in the set $\cS$ where $\cS$ is the set of eigenstates of the operators in $\Pi$ .
\end{lemma}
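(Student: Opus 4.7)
The plan is to exhibit an explicit measurement-and-reconstruction strategy that achieves the fidelity $(N+d-1)/(Nd)$, and then invoke the known optimal value of $\mathbb{F}_{\rm opt}(\cS_{\rm MUB})$ from Refs.~\cite{incompatibility_BM} (recalled in Eq.~\eqref{eq:F_bound} and~\eqref{eq:Q_ub}) to conclude that this strategy is optimal. The candidate strategy is the natural one: fix any basis $\Pi^k$ in the MUB set, take $\cM$ to be the projective measurement $\{M_a = \Pi^k_a\}_{a=1}^d$, and let the reconstruction $\cA$ output $\sigma_a = \Pi^k_a$ whenever outcome $a$ is observed.

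With this choice, the general expression in Eq.~\eqref{eq:fid_avg1} becomes
\begin{equation}
\mathbb{F}_{\rm avg} = \frac{1}{Nd}\sum_{i,j,a}\bigl[\tr(\Pi^i_j\,\Pi^k_a)\bigr]^2 .
\end{equation}
I would then split the sum over the input basis index $i$ into the two cases $i=k$ and $i\neq k$, and use the defining properties of MUBs from Eq.~\eqref{eq:mub}. For $i=k$, one has $\tr(\Pi^k_j\,\Pi^k_a)=\delta_{ja}$, which contributes $\sum_{j,a}\delta_{ja}=d$. For each $i\neq k$, the MUB condition gives $\tr(\Pi^i_j\,\Pi^k_a)=1/d$ for every pair $(j,a)$, so the inner double sum contributes $d^2\cdot(1/d)^2=1$, and summing over the remaining $N-1$ bases yields $N-1$. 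Combining these two pieces gives
\begin{equation}
\mathbb{F}_{\rm avg}(\cS_{\rm MUB},\cM_{\Pi^k},\cA_{\Pi^k}) \;=\; \frac{d+(N-1)}{Nd} \;=\; \frac{N+d-1}{Nd}.
\end{equation}

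To upgrade this from a lower bound to an equality with $\mathbb{F}_{\rm opt}$, I would invoke Eq.~\eqref{eq:F_bound} in reverse: the paper already recalled that for any ensemble $\mathbb{F}_{\rm opt}(\cS) \geq (N+d-1)/(Nd)$, and that for $\cS_{\rm MUB}$ this bound is saturated (the upper bound on $\cQ$ in Eq.~\eqref{eq:Q_ub} is attained precisely by MUBs). Therefore $\mathbb{F}_{\rm opt}(\cS_{\rm MUB}) = (N+d-1)/(Nd)$, matching the value just computed for the projective strategy, so the projective strategy is optimal. The argument is symmetric in $k$, so any basis from the MUB set can be chosen as the measurement basis.

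There is no real obstacle here beyond the bookkeeping of the two-case calculation; the lemma's content is really that the trivial-looking strategy (project and resend the outcome) is already as good as any POVM-based scheme, and this follows from the quantitative match with the known value of $\mathbb{F}_{\rm opt}(\cS_{\rm MUB})$. If one preferred a self-contained argument that did not cite the optimal value of $\cQ$, the subtler step would be proving the matching upper bound $\mathbb{F}_{\rm opt}(\cS_{\rm MUB}) \leq (N+d-1)/(Nd)$ directly, e.g.\ by writing $\mathbb{F}_{\rm avg}$ as $\tr(\Omega\,\Phi)$ for a Choi operator $\Phi$ of a quantum channel built from $(\cM,\cA)$ and bounding $\|\Omega\|_\infty$; but given that this bound is already available in the literature and in the preliminaries, the proof reduces to the explicit calculation above.
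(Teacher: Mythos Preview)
Your proposal is correct and follows essentially the same approach as the paper: choose a projective measurement in one of the MUB bases with matching reconstruction, compute $\mathbb{F}_{\rm avg}=(N+d-1)/(Nd)$ by splitting the sum into the same-basis and different-basis cases, and then invoke the tightness of the bound in Eq.~\eqref{eq:Q_ub} for MUBs (from~\cite{incompatibility_BM}) to conclude optimality. The only cosmetic difference is that the paper fixes $k=1$ while you keep $k$ arbitrary throughout.
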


\begin{proof}
Recall that for a pair of mutually unbiased bases, $\Pi_{j}, \Pi_{l}$,
$\tr(\Pi^i_j\Pi^k_j)=\delta_{ik}$, and for $ j\neq l$, $\tr(\Pi^i_j\Pi^k_l)=\frac{1}{d}$. Define a measurement $\cM$ with the rank-one projectors $\{M_a = \Pi^a_1 \}$ which are simply the basis vectors of the first basis $\Pi_{1}$ in the mutually unbiased set. Similarly we choose the reconstruction strategy given by $\sigma_a = \Pi^{a}_{1}$. Then the average fidelity can be computed as,
\begin{align}
\mathbb{F}(\cM,\cA)&=\frac{1}{Nd}\left[\sum_{i,a=1}^{d}\left(\tr(\Pi^i_1\Pi^a_1)\right)^2 + \sum_{i,a=1}^{d}\sum_{j\neq 1} \left( \tr(\Pi^i_j\Pi^a_1)\right)^2\right]\nonumber\\
&=\frac{1}{Nd}\left[\sum_{i,a=1}^{d}\delta_{ia}^2+\sum_{i,a}\sum_{j\neq 1}\frac{1}{d^2} \right]\nonumber\\
&=\frac{1}{Nd}\left[ d+\frac{(N-1)d^2}{d^2}\right]\nonumber\\
&=\frac{N+d-1}{Nd} . \nonumber
\end{align}

Comparing with Eqs.~\eqref{eq: f_max},~\eqref{eq:Q_ub}, and using the fact that these bounds are tight for a set of $N$ MUBs in $d$-dimensions (proved in~\cite{incompatibility_BM}), we see that the above choice of projective measurement $\cM$ and reconstruction strategy $\cA$ does achieve the value of $\mathbb{F}_{\rm opt}(\cS_{\rm MUB})$.
\end{proof}

\section{Optimal cloning fidelity for an ensemble of states}\label{sec:Qc_gen}

Recall that the optimal cloning fidelity for an ensemble $\cS$ of $M$ quantum states in $d$-dimensions is defined as,
\[ \cF_{\rm opt}(\cS)  = \underset{p,q: p^{2} > 2pq}{\rm max}\max_{\cB} \cF_{\rm avg}(\cS,\cB,p,q). \]

From the form of the average fidelity written down in Eq.~\eqref{eq:avg_fid2}, we see that,
\begin{eqnarray}
\cF_{\rm avg}(\cS, \cB, p, q) &=& 2pq +  (d-1)q^{2} \nonumber \\
&+& \left(\frac{A(\cS, \cB)}{M} \right)(p^{2}-2pq). \nonumber
\end{eqnarray}
The average cloning fidelity $\cF_{\rm avg}(\cS,\cB, p,q)$ depends on cloning basis $\cB$ only via the term $A(\cS, \cB)/M$. Furthermore, from Prop.~\ref{prop:zero_q} it follows that,
\[ \max_{\cB} \frac{A(\cS, \cB)}{M} \equiv \max_{\cB}\mathbb{F}_{\rm avg}(\cS, \cM_{\cB}, \cA_{\cB}),\]
where the RHS is the maximum average fidelity attained via a measure-and-reconstruct strategy, as defined in Sec.~\ref{sec:Q_avgFid}. 

Let $A_{\rm opt}(\cS) = \max_{\cB} A(\cS, \cB)$ denote the optimum value for the ensemble $\cS$, obtained by maximizing the measure-and-reconstruct fidelity function over orthonormal bases $\cB$. Then, solving for the optimal cloning fidelity reduces to the following simple form.
\begin{align}
& \cF_{\rm opt}(\cS) \label{eq:fopt_q} \\
& = \max_{p,q}\left[ 2pq +  (d-1)q^{2} + \left(\frac{A_{\rm opt}(\cS)}{M} \right)(p^{2}-2pq) \right] \nonumber \\
&= \max_{p,q}\left[\left(1 - \frac{A_{\rm opt}(\cS)}{M} \right) 2pq +  (d-1)q^{2} + \frac{A_{\rm opt}(\cS)}{M} (p^{2}) \right] \nonumber \\
&= \max_{q}\bigg[ 2q\sqrt{1-2(d-1)q^{2}} + (d-1)q^{2} \nonumber \\
& + \left(\frac{A_{\rm opt}(\cS)}{M} \right)(1 -2(d-1)q^{2}-2q\sqrt{1-2(d-1)q^{2}} )\bigg] , \nonumber
\end{align}
where we have used the relation $p^2+2(d-1)q^2=1$.

Since $p\geq 0$, we have, $0\leq q^2\leq \frac{1}{2(d-1)}$. Setting $\sqrt{2(d-1)}q=\sin \theta$, the optimization problem reduces to,

\begin{equation}
\cF_{\rm opt}(\cS) =  \max_{\theta} f(\theta),
\end{equation}
where the objective function $f(\theta)$ is defined as,
\begin{eqnarray}
f(\theta) &=& 
\bigg[ \frac{\sin^{2}\theta}{2} + \frac{\sin2\theta}{\sqrt{2(d-1)}}  \nonumber \\
&+& \frac{A_{\rm opt}(\cS)}{M}\left(\cos^{2}\theta - \frac{\sin2\theta}{\sqrt{2(d-1)}} \right) \bigg] . \label{eq:fopt_theta}
\end{eqnarray}

Therefore,
\begin{align}
\frac{d f}{d \theta} &=& \sin2\theta\left(\frac{1}{2} - \frac{A_{\rm opt}}{M}\right) + \frac{2\cos2\theta}{\sqrt{2(d-1)}}\left(1 - \frac{A_{\rm opt}}{M}\right) .\nonumber
\end{align}

Setting $\frac{d f}{d \theta}=0$, we see that the extremal values of $\theta$ occur as solutions to the equation,


\begin{equation}
\tan 2\theta =  \frac{4(A_{\rm opt} - M)}{(M-2A_{\rm opt})\sqrt{2(d-1)}}. \label{eq:theta_opt}
\end{equation}
The right hand side is a function of the number of states $M$ and the dimension $d$. Defining the function $\cG(M,d)$ as,
\begin{equation}
\cG(M,d) = \frac{4(A_{\rm opt} - M)}{(M-2A_{\rm opt})\sqrt{2(d-1)}}, \label{eq:G}
\end{equation}
the extremal values of the parameter $q$ are obtained by solving,
\begin{equation}
 \frac{2\sqrt{2(d-1)}q\sqrt{1-2(d-1)q^2}}{1-4(d-1)q^2} = \cG(M,d) . \label{eq:qopt1}
 \end{equation}
The optimal values of $q$ for which the cloning fidelity attains its extremal values are thus,
\begin{equation}
q_{\rm opt}=\pm\frac{1}{2\sqrt{d-1}}\sqrt{1 \pm \frac{1}{\sqrt{1+ (\cG(M,d))^2}}} . \label{eq:qopt2}
\end{equation}

Since the expression for average cloning fidelity in Eq.~\eqref{eq:fopt_q} is a sum of positive terms ($A_{\rm opt} \leq M$) with a linear term in $p,q$, we see that the maximum value of cloning fidelity is attained for $p\geq0$, $q\geq 0$. Thus the optimal values of $q$ for which the cloning fidelity is maximised are,
\begin{equation}
q_{\rm opt} = \frac{1}{2\sqrt{d-1}}\sqrt{1 \pm \frac{1}{\sqrt{1+\cG(N,d)^2}}}.
\end{equation}
By working out the second derivative of the function in Eq.~\eqref{eq:fopt_theta},
\begin{eqnarray}
\frac{d^{2} f}{d\theta^2} &=& 2\cos2\theta \bigg[\left(\frac{1}{2} - \frac{A_{\rm opt}}{M}\right) \nonumber \\
&-& \frac{2\tan2\theta}{\sqrt{2(d-1)}}\left(1 - \frac{A_{\rm opt}}{M}\right)\bigg], \label{eq:second_der}
\end{eqnarray}
we see that,
\begin{equation}
q_{\rm opt} = \frac{1}{2\sqrt{d-1}}\sqrt{1 - \frac{{\rm sgn} \left(\frac{A_{\rm opt}}{M} -\frac{1}{2}\right)}{\sqrt{1 + (\cG(M,d))^{2}}}}, \label{eq:qopt3}
\end{equation}
where ${\rm sgn}(x)$ is the signum function defined as 
\[{\rm sgn}(x) = \bigg\lbrace \begin{array}{cc} 
-1 & {\rm if} \; x < 0, \\
0 & {\rm if} \; x = 0, \\
+1 & {\rm if} \; x > 0. 
\end{array} \]

Our solution for the optimal cloning fidelity for any given ensemble of quantum states, leads us to observe some interesting properties of optimal QCMs. The first observation we may make is that, when the ensemble $\cS$ is such that $\frac{A_{\rm opt}}{M} = \frac{1}{2}$, the corresponding optimal QCM has parameters,
\begin{equation}
q_{\rm opt} = \frac{1}{2\sqrt{d-1}}, \; p_{\rm opt} = \frac{1}{\sqrt{2}}. \label{eq:Aopt_half}
\end{equation}

\subsection{Optimal Cloning Fidelity for $N$ MUBs in $d$ dimensions: Proof of Lemma~\ref{lem:Qc_MUB}}\label{sec:Qc_mub}

We now evaluate the optimal cloning fidelity for an ensemble $\cS_{\rm MUB}$ of states which constitute a set of MUBs. This in turn, provides an upper bound on the cloning-based incompatibility measure $\cQ_{c}$ as shown in Sec.~\ref{sec:cloning_measure}.  

Let $\cX_{\rm MUB} \equiv \{X^{1}, X^{2}, \ldots, X^{N}\}$ denote a set of $N$ MUBs in $d$-dimensions. The corresponding ensemble of eigenstates is given by $\cS_{\rm MUB} \equiv \{\frac{1}{Nd}, |\psi^{l}_{m}\rangle\langle \psi^{l}_{m}|\}$, with $1\leq l\leq N$ and $1\leq m \leq d$, where $|\psi^{l}_{m}\rangle$ is the $m^{\rm th}$ basis vector of the $l^{\rm th}$ basis $X^{l}$. 

As before, the optimal cloning fidelity for the ensemble $\cS_{\rm MUB}$ is calculated as,
\[ \cF_{\rm opt}(\cS_{\rm MUB})  = \max_{p,q: p^{2} > 2pq}\max_{\cB} \cF_{\rm avg}(\cS_{\rm MUB},\cB, p, q). \]
From the form of the average fidelity written down in Eq.~\eqref{eq:avg_fid2}, we see that,
\begin{eqnarray}
\cF_{\rm avg}(\cS_{\rm MUB}, \cB, p, q) &=& 2pq +  (d-1)q^{2} \nonumber \\
&+& \left(\frac{A(\cS_{\rm MUB}, \cB)}{Nd} \right)(p^{2}-2pq). \nonumber
\end{eqnarray}
The average cloning fidelity $\cF_{\rm avg}(\cS_{\rm MUB},\cB,p, q)$ depends on cloning basis $\cB$ only via the term $A(\cS, \cB)/Nd$. Furthermore, from Prop.~\ref{prop:zero_q} it follows that,
\[ \max_{\cB} \frac{A(\cS_{\rm MUB}, \cB)}{Nd} \equiv \max_{\cB}\mathbb{F}_{\rm avg}(\cS_{\rm MUB}, \cM_{\cB}, \cA_{\cB}),\]
where the RHS is the maximum average fidelity attained via a measure-and-reconstruct strategy, as defined in Sec.~\ref{sec:Q_avgFid}. The problem of finding the optimal cloning basis $\cB$ is thus reduced to the problem of finding the optimal projective measurement $\cM_{\cB}$. For an ensemble of MUBs, Lemma~\ref{lem:proj_opt} shows that the optimal basis is simply one of the bases in the set $\cX$.  Without loss of generality, we may therefore fix the cloning basis $\cB$ to be one of the eigenbases of one of the observables in the set, say $X^1$. 

Expanding the state $\ket{\psi^{l}_{m}}$ in the basis $\cB = \{\ket{\psi^{1}_{i}}, \, 1\leq i\leq d\}$, we have,
\[ \ket{\psi^{l}_{m}} = \sum^d_{i} (\alpha^{l}_{m})_i\ket{\psi_{i}^{1}} . \]

Since the observables in the set are mutually unbiased, the coefficients are such that $\mid  (\alpha ^{l}_{m})_{i} \mid=  \frac{1}{\sqrt{d}}$ for $l\neq 1$, and, $\mid (\alpha^{l}_{m})_{i} \mid = \delta_{im}$ for $l = 1$. Recalling the definitions of $A^{l}_{m}$ and $B^{l}_{m}$ from Eq.~\eqref{eq:A_B} in Sec.~\ref{sec:opt_cloning}, we have, 
\begin{eqnarray} 
A^{1}_{m} &=& \sum_i\mid  (\alpha^{1}_{m})_i\mid^4 =\sum_i\delta_{mi}=1 \nonumber \\
B^{1}_{m}&=& \sum_{i\neq j}\mid (\alpha^{1}_{m})_i \mid^2 \mid (\alpha^{1}_{m})_j \mid^2 = \sum_{i\neq j}\delta_{mi}^2\delta_{mj}^2 .\nonumber
\end{eqnarray}
For $l\neq 1$, 
\begin{eqnarray}
A^{l}_{m} &=& \sum_i\mid \alpha^{lm}_i\mid^4 
=\frac{1}{d} \nonumber \\
B^{l}_{m}&=& \sum^d_{{i,j=1},{i\neq j}} \mid (\alpha^{l}_{m})_i\mid^2 \mid (\alpha^{l}_{m})_j\mid^2 = \frac{(d-1)}{d} . \nonumber
\end{eqnarray}
Thus we have,
\[
 A_{\rm opt}(\cS_{\rm MUB}) = \sum_{l,m}A^{l}_{m} = N-1+d .\] 

Using this value of $A_{\rm opt}$, we see that for $q=0$,
\begin{equation}
\cF_{\rm opt}(\cS_{\rm MUB}) = \frac{A_{\rm opt}}{Nd} = \frac{(N-1+d)}{Nd},  \label{eq:Fproj_mub}
\end{equation}
which matches with the optimal fidelity obtained for an ensemble of MUBs using a measure-and-reconstruct strategy, obtained in~\cite{incompatibility_BM}.

Following the steps in Sec.~\ref{sec:Qc_gen}, we see that the cloning fidelity for the ensemble $\cS_{\rm MUB}$ is attained for 
\[
q_{\rm opt} = \frac{1}{2\sqrt{d-1}}\sqrt{1- \frac{{\rm sgn}\left(\frac{A_{\rm opt}}{Nd} - \frac{1}{2}\right)}{\sqrt{1+(\cG(N,d))^2}} } . 
\]
Hence,
\begin{equation}
\cF_{\rm opt}(\cS_{\rm MUB}) =\cF_{\rm avg}\left(\cS_{\rm MUB}, q_{\rm opt}, X^{1} \right)  
\end{equation}
is the optimal cloning fidelity that can be achieved using a  symmetric QCM, for a set of $N$ MUBs in $d$ dimensions.

\subsection{Optimal cloning fidelity for $N>d+1$ observables}\label{sec:N_d+1}

Since $\frac{A_{\rm opt}}{Nd} \geq \frac{2}{d+1}$, we have,
\begin{eqnarray}
&& \cF_{\rm avg}(\cS,\cB_{\rm opt}, p,q) \geq 2pq + (d-1)q^{2} + \frac{2}{d+1}(p^{2} - 2pq) \nonumber  \\
&=& \frac{1}{d+1}\left[ (d-1)2pq + 2p^{2} + (d+1)(d-1)q^{2} \right] \nonumber \\
&=& \frac{1}{d+1}\left[ 2 + (d-1)(d-3)q^{2} + 2(d-1)q\sqrt{1-2(d-1)q^{2}} \right] \nonumber \\
&\equiv& \left(\frac{1}{d+1}\right) h(q),  \nonumber 
\end{eqnarray}
where we have used $h(q)$ to denote the function to be maximised. Setting $\sqrt{2(d-1)}q = \sin\theta$, the optimization problem becomes,
\begin{eqnarray}
&& \max_{q}h(q)  \nonumber \\
&=& \max_{\theta}\left[ 2 + \left(\frac{d-3}{2}\right)\sin^{2}\theta + \sqrt{2(d-1)}\sin\theta\cos\theta \right] \nonumber \\
&=& \max_{\theta} h(\theta). \label{eq:hq}
\end{eqnarray}
Setting $\frac{dh}{d\theta} = 0$, we have,
\[
\left(\frac{d-3}{2}\right)\sin2\theta + \sqrt{2(d-1)}\cos2\theta = 0. \]
Thus the optimal value of $\theta$, denoted as $\theta_{\rm opt}$, satisfies,
\begin{equation}
\tan2\theta_{\rm opt} = \frac{-2\sqrt{2(d-1)}}{d-3} .\label{eq:theta_opt_d+1}
\end{equation}
To check that $h(\theta)$ attains a minimum at $\theta = \theta_{\rm opt}$, we note that the second derivative is,
\[h'' (\theta) = (d-3)\cos2\theta\left[ 1 - \left(\frac{2\sqrt{2(d-1)}}{d-3}\right) \right]\tan2\theta .\]
Setting $\theta=\theta_{\rm opt}$, we see that $h''(\theta_{\rm opt}) < 0$, for all $d>1$.
Simplifying Eq.~\eqref{eq:theta_opt_d+1}, and replacing $\theta$ with $q$, we have,
\begin{eqnarray}
\frac{\sin^{2}2\theta_{\rm opt}}{\cos^{2}2\theta_{\rm opt}} &=& \frac{8(d-1)}{(d-3)^{2}} \nonumber \\
\Rightarrow \cos2\theta_{\rm opt} &=& \pm \frac{d-3}{d+1} . \nonumber
\end{eqnarray}
We see that $h''(\theta_{\rm opt}) < 0$ for for all $\cos2\theta_{\rm opt}=-\frac{d-3}{d+1}$ and $h''(\theta_{\rm opt}) > 0$ otherwise for $d\neq3$. 

Substituting back for $q$, we have,
\begin{eqnarray}
1 - 4(d-1)q^{2}_{\rm opt} &=& -\frac{d-3}{d+1} \nonumber \\
\Rightarrow q_{\rm opt} &=& \sqrt{\frac{1}{2(d+1)}} . \label{eq:qopt_d+1}
\end{eqnarray}
Substituting back for $q_{\rm opt}$ in $h(q)$ (see Eq.~\eqref{eq:hq} above), we get,
\begin{eqnarray}
\cF_{\rm opt}(\cS) &\geq& \frac{1}{d+1}\left[ 2 + \frac{(d-3)(d-1)}{2(d+1)}  + \frac{2(d-1)}{d+1} \right] \nonumber \\
&=& \frac{d+3}{2(d+1)}  \equiv \tilde{\cF}_{\rm opt}(d),
\end{eqnarray}
as desired. For $d=3$, $q_{\rm opt} = \frac{1}{2\sqrt{2}}$ consistent with Eq.~\eqref{eq:qopt_d+1}. 

\section{Average cloning fidelity for a pair of qubit observables}\label{sec:qubit_app}

Recall from Lemma~\ref{lem:Qc_gen} that the incompatibility of a set of observables $\chi$, is obtained from the average cloning fidelity of the corresponding ensemble of eigenstates $\cS$ as, 
\[\cQ_{c}(\cX) = 1 - \cF_{\rm avg} (\cS,\cB_{\rm opt}, p_{\rm opt}, q_{\rm opt} ).\]
For the case of the qubit observables defined in Sec.~\ref{sec:qubit}, the average fidelity is evaluated from Eqn.~\ref{eq:avg_fid2} as,
\begin{eqnarray}
&& \cF_{\rm avg}(\cS) \nonumber \\
&=& \frac{\cA_{\rm opt}}{Nd}(p_{\rm opt}^2-2p_{\rm opt}q_{\rm opt})+2p_{\rm opt}q_{\rm opt}+(d-1)q_{\rm opt}^2 . \nonumber
\end{eqnarray}
For the case of the qubit observables, with the eigenstate ensemble $\cS_{2}$ given in Eq.~\eqref{eq:S_qubit}, we know from Lemma~\ref{lem:qubit} that, 
\begin{eqnarray}
q_{\rm opt}&=&\frac{1}{2}\sqrt{1-\frac{1}{1+\cG(\vec{a}.\vec{b})^2}}, \nonumber \\
\cG(\vec{a}.\vec{b})&=& \frac{\sqrt{2}(1-\mid\vec{a}.\vec{b}\mid)}{1+\mid\vec{a}.\vec{b}\mid)}, \nonumber
\end{eqnarray}
so that,
\begin{equation}
q_{\rm opt}=\frac{1}{2}\sqrt{1-\frac{1+\vec{a}.\vec{b}}{\sqrt{(1+\vec{a}.\vec{b})^2+2(1-\vec{a}.\vec{b})^2}}} .
\end{equation}
The normalisation condition $ p_{\rm opt}^2+2(d-1)q_{\rm opt}^2=1$ gives,

\begin{align}
p_{\rm opt} =\sqrt{\frac{1}{2}\left(1+\frac{1+\vec{a}.\vec{b}}{\sqrt{(1+\vec{a}.\vec{b})^2+2(1-\vec{a}.\vec{b})^2}}\right)} .
\end{align}


Finally, as shown in Eq.~\eqref{eq:A_qubit} (Sec.~\ref{sec:qubit}), we have,
\begin{align}
\cA_{\rm opt}(\cS_{2}) =3+\mid\vec{a}.\vec{b}\mid
\end{align}
Thus, the average cloning fidelity for a pair of qubit observables ($N=2$ and $d=2$), can be evaluated as,
\begin{eqnarray}
&& \cF_{\rm avg}(\cS_{2}) \nonumber \\
&=& \frac{5+\mid\vec{a}.\vec{b}\mid}{8}+\frac{1}{8}\sqrt{(1+\mid\vec{a}.\vec{b}\mid)^2+2(1-\mid\vec{a}.\vec{b}\mid)^2}.
\end{eqnarray}
%

If we take $x=\mid\vec{a}.\vec{b}\mid,(0 \leq x \leq 1)$, the graph is monotonically increasing with $x$.

\end{document}